\documentclass[letterpaper,journal]{IEEEtran}
\usepackage{amsmath,amsfonts}
\usepackage{amssymb}
\usepackage{amsthm}
\usepackage{cite}
\usepackage{footnote}
\usepackage{stfloats}
\usepackage{array}
\usepackage{enumerate}
\usepackage{multirow}
\usepackage{balance}
\usepackage{algorithm}
\usepackage{algpseudocode}
\usepackage{graphicx}
\usepackage{subfigure}
\usepackage{textcomp}
\usepackage{color}
\usepackage{xcolor}
\usepackage{tabularx}
\usepackage{booktabs}
\usepackage{verbatim}
\usepackage{hyperref}
\usepackage{makecell}

\theoremstyle{definition}
\newtheorem{theorem}{Theorem}
\newtheorem{lemma}{Lemma}
\newtheorem{proposition}{Proposition}
\newtheorem{corollary}{Corollary}

\begin{document}


\title{Covert Digital Semantic Communication With Movable Antennas: Constellation Shaping, Spatial Optimization, and Semantic Mapping}

\author{Yunfan~Bai,~\IEEEmembership{Graduate Student Member,~IEEE}, Yuwen~Qian, Hao~Jiang,~\IEEEmembership{Senior Member,~IEEE}, Yiyang~Ni,~\IEEEmembership{Senior Member,~IEEE}, Feng~Shu,~\IEEEmembership{Senior Member,~IEEE}, Jun~Li,~\IEEEmembership{Fellow,~IEEE}, and Shuning~Zhang

\thanks{Yunfan~Bai, Yuwen~Qian, and Shuning~Zhang are with the School of Electronic and Optical Engineering, Nanjing University of Science and Technology, Nanjing 210094, China (e-mail:\{baiyunfan, admon, and zhangshuning\}@njust.edu.cn).}

\thanks{Hao~Jiang is with the School of Cyber Science and Engineering, Southeast University, Nanjing 210096, China. (e-mail: jiang.hao@seu.edu.cn).}

\thanks{Yiyang~Ni is with the School of Computer Engineering, Jiangsu Second Normal University, Nanjing 210037, China (e-mail:niyy@jssnu.edu.cn ).}

\thanks{Feng~Shu is with the School of Information and Communication Engineering and Collaborative Innovation Center of Information Technology, Hainan University, Haikou 570228, China, and also with the School of Electronic and Optical Engineering, Nanjing University of Science and Technology, Nanjing 210094, China. (e-mail: shufeng0101@163.com).}

\thanks{Jun~Li is with the School of Information Science and Engineering, Southeast University, Nanjing 210096, China. (e-mail: jun.li@seu.edu.cn).}

}

\maketitle

\begin{abstract}
Digital semantic communication~(SemCom) is emerging as a bandwidth-efficient paradigm for next-generation wireless networks by delivering the meaning of a source over finite constellations.
However, the broadcast nature of wireless channels exposes transmissions to adversarial wardens, while discrete symbols deviate from the Gaussian signaling assumed by conventional covert communication.
In this paper, we propose a covert digital SemCom system assisted by movable antennas~(MAs), where the constellation, the antenna positions, and the semantic transceiver are jointly designed to conceal the transmission while preserving the semantic fidelity.
Specifically, the maximum covert rate against the optimal detector with unknown noise level is derived in closed-form, which reveals that the legitimate signal-to-noise ratio~(SNR) scales with the moment-matching order of the constellation and the spatial gain provided by the MA array.
A Gauss--Laguerre ring constellation is then designed to raise the moment-matching order under the peak power constraints, and a covert symbol mapping with rank-to-ring assignment and scrambling is developed to carry the semantics in the ordering of the latent magnitudes.
Guided by the derived semantic distortion bound, the semantic encoder and decoder are trained end-to-end through an invertible flow over the designed constellation and MA array.
Simulation results demonstrate that the proposed scheme supports a higher achievable covert rate and improves the reconstruction quality by 24\% compared to state-of-the-art SemCom schemes.

\end{abstract}

\begin{IEEEkeywords}
Movable antennas, covert communication, digital semantic communication, network security.
\end{IEEEkeywords}

\section{Introduction}
\vspace{-0.3em}

\IEEEPARstart{T}{he} next-generation wireless networks are envisioned to support efficient information exchange and reliable task execution for intelligent services under spectrum constraints.
To reconcile service demands with limited transmission resources, semantic communication~(SemCom) prioritizes task-relevant information over exact bit recovery to reduce the communication overhead~\cite{xie2021deep}.
Reliable semantic delivery over noisy channels has motivated deep learning~(DL)-based joint source-channel coding~(JSCC), which jointly optimizes source representation and channel protection~\cite{nguyen2026contemporary}.
To meet the requirement of practical transceivers that modulate finite constellations under standard hardware constraints, digital SemCom has emerged as a promising architecture for semantic transmission~\cite{zou2025analogdigital}.

Beyond transmission reliability, SemCom security is critical since open wireless channels expose sensitive information to unauthorized interception~\cite{guo2025survey}.
Despite the protection of message content, observable transmission activity reveals the presence of communicating devices to unintended observers.
Covert communication complements content protection through a low probability of detection (LPD), which limits a passive warden's ability to distinguish transmission from silence.
The square-root law establishes a fundamental throughput limit for covert communication under the classical additive white Gaussian noise~(AWGN) channel model~\cite{bash2013limits}.
Under the stringent constraint of transmission budget, compact semantic representations support reconstruction and inference with reduced communication overhead~\cite{bai2026mimo}.
Reliable delivery of compact semantic representations under limited transmit power motivates spatial channel adaptation to improve the legitimate channel gain.
Meanwhile, movable antennas~(MAs) reshape the wireless channel through position adjustment within a confined region, which improves the legitimate signal-to-noise ratio~(SNR) without increasing the number of radio-frequency chains~\cite{zhu2024modeling,zhu2023movable}.
The combination of semantic compactness and spatial reconfiguration therefore offers a promising route to covert semantic delivery over digital transceivers.

Recent advances in MA-assisted transmission have optimized the antenna positions under statistical channel state information~(CSI) to enable infrequent antenna reconfiguration~\cite{yan2026statistical}.
For covert transmission, the position-dependent channel response shapes the legitimate reception and warden observations simultaneously.
Joint antenna positioning and beamforming were designed to separate a warden located close to the legitimate receiver with wavelength-scale spatial resolution~\cite{wang2025resolution}.
Movable frequency diverse arrays further extended the spatial design to multiple wardens by adjusting the antenna positions and the carrier frequencies~\cite{cheng2025movablefda}.
With a full-duplex receiver, the antenna positions and the artificial noise~(AN) power were jointly optimized to improve the covert throughput~\cite{hu2026movable}.
Beyond the spatial configuration, the reflection coefficients of an intelligent reflecting surface~(IRS) were optimized together with the transmit power for finite-alphabet inputs~\cite{zhou2025finitealphabet}.
However, the above MA-assisted designs adopt Gaussian signaling under which warden observations are characterized by the received power, whereas the finite-alphabet analysis fixes the constellation over fixed antennas.
Consequently, formal design criteria remain lacking for the admissible transmit power and antenna positions, and existing bounds yield overly conservative limits when digital transmitters use finite constellations under warden noise uncertainty.

The symbol statistics are determined by the SemCom transceiver, where reconstruction-oriented JSCC has advanced from convolutional to transformer backbones and SwinJSCC adapts continuous channel symbols to channel conditions and transmission rates within a shared model~\cite{yang2025swinjscc}.
Continuous symbols are not directly compatible with digital hardware, and DeepJSCC-Q therefore constrains the encoder output to a finite constellation through differentiable quantization~\cite{tung2022deepjsccq}.
A masked vector-quantized variational autoencoder~(VQ-VAE) further transmits discrete codebook indices for robust semantic delivery~\cite{hu2023robust}.
For resource adaptation, Digital-SC integrates learned nonlinear quantization with an adaptive network split~\cite{guo2025digitalsc}.
At the modulation level, joint coding-modulation learns source-dependent probabilities over the constellation points through a variational autoencoder~\cite{bo2024joint}.
Universal JSCC supports different modulation orders through shared neural transforms and learned codebooks~\cite{huh2025universal}.
Joint semantic-channel coding and modulation further adapts digital tokens to the semantic content and the channel condition~\cite{ying2026token}.
Since the discrete operations hinder gradient-based training, an alternating strategy with mask attacks emulates bit errors across the nondifferentiable modules~\cite{gong2026digital}.
Nevertheless, the above designs let the symbol probabilities follow the source and learn the mappings for fidelity, which leaves the warden with source-dependent correlated symbols and rules out a covertness guarantee without a mapping that fixes the block distribution.


Toward integrating covertness with semantics, covertness-oriented AN was designed to conceal sensitive image information while the semantic features can be recovered at the legitimate receiver~\cite{yin2026semantic}.
A multiple-input multiple-output~(MIMO) framework conveys covert semantics through spatial modulation under AN masking, where an excess-distortion exponent bound guides the training of the semantic network~\cite{bai2026mimo}.
Under noise uncertainty at the warden, frequency hopping jointly optimizes the transmit power and the frequency allocation to increase the detection error~\cite{li2026frequencyhopping}.
The integration of semantic compression with frequency hopping further optimizes the reconstruction quality and the transmission latency under finite-blocklength covertness constraints~\cite{hui2026frequencyhopping}.
Although existing semantic covert schemes improve the recovery quality under covertness, the joint influence of MA placement and finite-constellation mappings on attainable semantic distortion remains insufficiently characterized under constraints on covertness and block statistics.


To address these issues, we propose a digital semantic covert communication framework assisted by MAs.
In particular, the main contributions of this paper are listed as follows.

\begin{itemize}
    \item We propose a covert digital SemCom framework assisted by MAs, where the semantic latent is conveyed over a finite constellation in the presence of a warden with unknown noise level. 
    Specifically, the constellation, the antenna positions and the semantic transceiver are jointly designed to minimize the semantic distortion under the covertness constraint.

    \item To characterize the covertness of finite constellations, the maximum covert rate under the composite null hypothesis is derived in closed-form via the Kullback--Leibler~(KL) divergence. 
    A Gauss--Laguerre~(GL) ring constellation is then designed to raise the moment matching order under the peak power constraint.

    \item To exploit the spatial degrees of freedom provided by MAs, a matching-order-aware placement is derived to maximize the legitimate SNR advantage over the warden. 
    A differentiable outage surrogate is developed to characterize the covert power loss, which couples the antenna positions with the constellation.

    \item A digital semantic transceiver is developed over the designed constellation and MA array, where a covert symbol mapping with rank-to-ring assignment and scrambling decouples the transmitted symbols from the source. 
    By leveraging a derived semantic distortion bound, an invertible flow together with the semantic encoder and decoder network is trained end-to-end to recover high-fidelity semantics under the covertness constraint.
\end{itemize}



\textit{Notations:} Boldface lowercase and uppercase letters denote vectors and matrices, and $(\cdot)^{\top}$, $(\cdot)^{H}$, $(\cdot)^{*}$ and $(\cdot)^{\star}$ denote the transpose, the conjugate transpose, the complex conjugate and an optimal value.
$\mathbb{E}[\cdot]$, $\Pr\{\cdot\}$, $D(\cdot\|\cdot)$, $I(\cdot;\cdot)$ and $h(\cdot)$ denote the expectation, the probability, the Kullback--Leibler divergence, the mutual information and the differential entropy, where all information quantities are in nats with the natural logarithm.
$\mathcal{CN}(\mathbf{0},\boldsymbol{\Sigma})$ denotes the circularly symmetric complex Gaussian distribution with covariance $\boldsymbol{\Sigma}$, $\lambda_{\max}(\cdot)$ denotes the largest eigenvalue of a Hermitian matrix, $\mu^{\otimes L}$ denotes the $L$-fold product of $\mu$, and $f\sim g$ means that $f/g$ tends to one in the stated limit.


\section{System Model}\label{sec: system_model}
\vspace{-0.3em}

As illustrated in Fig.~\ref{system_model}, the considered covert SemCom system consists of a legitimate transmitter Alice with $M$ movable antennas, a legitimate receiver Bob with $N$ fixed antennas and a warden Willie with $M_{w}$ fixed antennas.
Alice delivers a sequence of semantic objects to Bob and simultaneously keeps the transmission undetectable to Willie, where each covert session spans $n$ channel uses and carries $J=n/L$ semantic objects of $L$ symbols each.

\begin{figure}[t]
    \centering
    \includegraphics[width=0.9\linewidth]{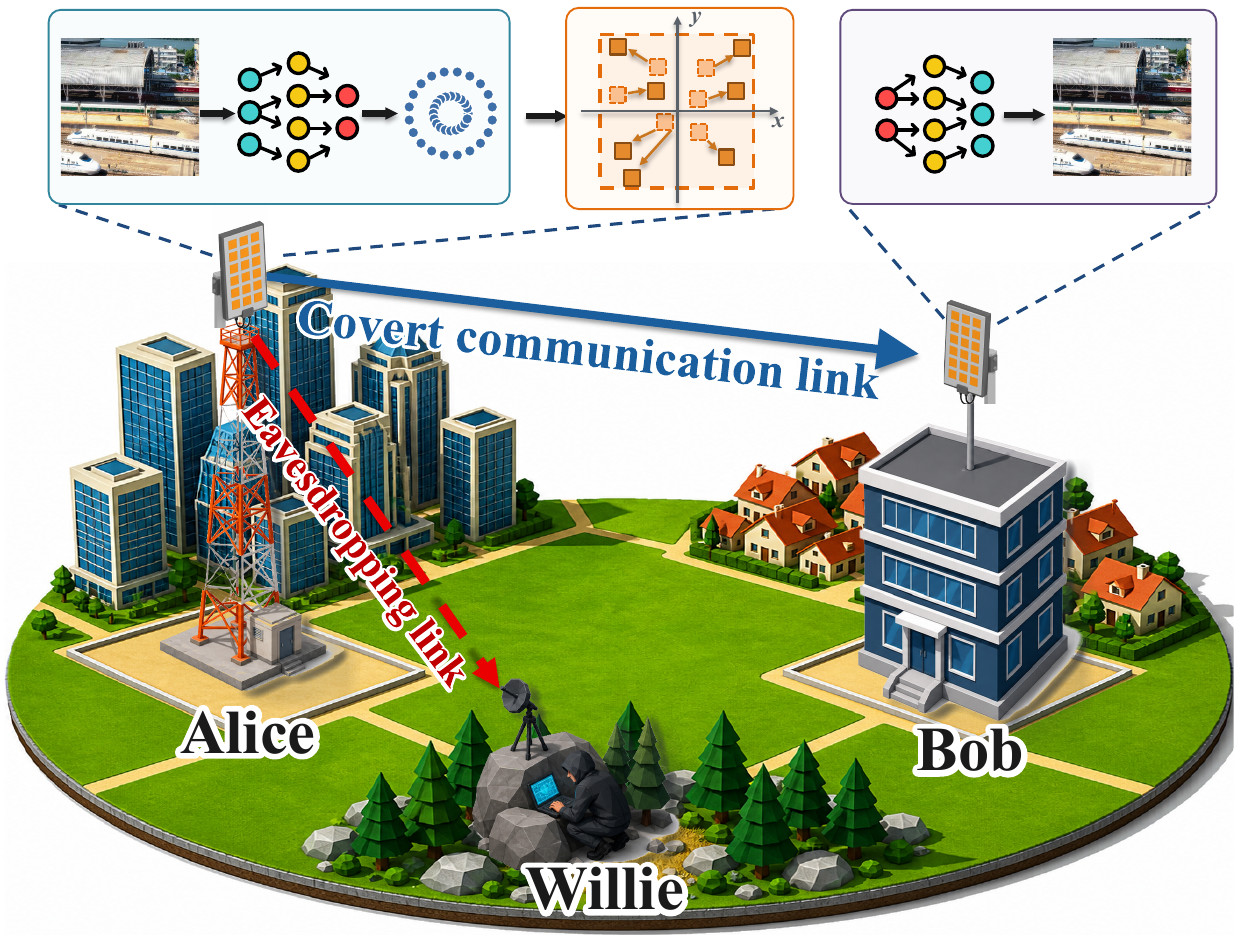}
    \vspace{-0.5em}
    \caption{System model of the proposed MA-assisted digital semantic covert communication framework, where Alice transmits finite-constellation semantic symbols to Bob in the presence of the warden Willie.
}
    \label{system_model}
    \vspace{-1.5em}
\end{figure}


\subsection{Digital SemCom Framework}
\vspace{-0.3em}

Each semantic object is represented by a pair $(S,\Theta)$, where $S$ denotes an image and $\Theta$ denotes a latent task label associated with $S$.
The transmitter at Alice cascades a semantic encoder $E_{\varphi}$ with parameter $\varphi$ that compresses $S$ into a latent vector $\mathbf{u}\in\mathbb{R}^{2L}$, an invertible transport mapping $F_{\theta}$ with parameter $\theta$ that reshapes $\mathbf{u}$ into $\mathbf{v}\in\mathbb{R}^{2L}$, and a structured digital mapping $\mathcal{M}$ that treats the $2L$ real coordinates of $\mathbf{v}$ as $L$ complex symbols on a $K$-point constellation $\mathcal{X}$ as
\vspace{-0.4em}
\begin{equation}
\mathbf{x}=\mathcal{M}\big(F_{\theta}(E_{\varphi}(S))\big)\in\mathcal{X}^{L}.
\label{tx_chain}
\end{equation}
The distribution of the transmitted symbols is denoted by $\mu$ with $\mu(a)=\Pr\{x_{k}=a\}$ for $a\in\mathcal{X}$, with the symbols being normalized to unit average power $\mathbb{E}_{\mu}[|x_{k}|^{2}]=1$.
The scrambling and interleaving operations inside $\mathcal{M}$ are driven by a source-independent pseudo-random sequence that is refreshed per block, shared by Alice and Bob while unknown to Willie.

The end-to-end performance of SemCom is measured by distortion functions. Specifically, $\varrho_{\mathrm{o}}(s,\hat{s})$ denotes the reconstruction distortion between the source image $s$ and the reconstructed image $\hat{s}$, and $\varrho_{\mathrm{s}}(\theta,\hat{\theta})$ denotes the task distortion between the true label $\theta$ and the inferred label $\hat{\theta}$, respectively.
Since $\Theta$ is latent, the task distortion is measured through the posterior average
\vspace{-0.4em}
\begin{equation}
\bar{\varrho}_{\mathrm{s}}(s,\hat{\theta})=\mathbb{E}\big[\varrho_{\mathrm{s}}(\Theta,\hat{\theta})\,\big|\,S=s\big].
\label{post_dist}
\end{equation}
The weighted semantic distortion is defined as $d_{\beta}=\varrho_{\mathrm{o}}+\beta\bar{\varrho}_{\mathrm{s}}$, where the coefficient $\beta>0$ aims to balance the reconstruction quality and the task accuracy, and Bob employs a semantic decoder $G_{\psi}$ with parameter $\psi$ to recover the estimated pair $(\widehat{S},\widehat{\Theta})$.


\subsection{Transmission Model}
\vspace{-0.3em}

The $M$ MAs at Alice are connected to a single radio-frequency~(RF) chain via flexible cables and are repositioned by drivers within a two-dimensional transmit region $\mathcal{A}$, where the position of the $m$-th MA is denoted by $\mathbf{t}_{m}\in\mathcal{A}$.
All positions are collected into the antenna position vector (APV)
\vspace{-0.4em}
\begin{equation}
\begin{aligned}
\mathbf{T}&=[\mathbf{t}_{1},\dots,\mathbf{t}_{M}]^{\top}\in\mathcal{T},\\
\mathcal{T}&=\big\{\mathbf{T}:\mathbf{t}_{m}\in\mathcal{A},\ \|\mathbf{t}_{m}-\mathbf{t}_{m'}\|\ge d_{\min},\ \forall m\ne m'\big\},
\end{aligned}
\label{apv}
\end{equation}
where $d_{\min}$ denotes the minimum inter-MA distance that avoids mutual coupling.
The APV is reconfigured before each covert session and remains fixed over the $n$ channel uses within the session.

Under the far-field condition where the movement of an MA changes the phase of each path response without altering the path gain~\cite{zhu2024modeling}, the channel matrices from Alice to Bob and to Willie are expressed as
\vspace{-0.4em}
\begin{equation}
\mathbf{H}_{r}(\mathbf{T})=\sum_{\ell=1}^{L_{p}}\sigma_{r,\ell}\,\mathbf{c}_{r}(\varpi_{r,\ell})\,\mathbf{a}^{H}(\vartheta_{r,\ell};\mathbf{T}),\quad r\in\{b,w\},
\label{channel_bob}
\end{equation}
where $L_{p}$ denotes the number of scattering paths,
$\sigma_{r,\ell}\sim\mathcal{CN}(0,1/L_{p})$ denotes the complex
gain of the $\ell$-th path, and
$\mathbf{c}_{r}(\varpi_{r,\ell})$ and
$\mathbf{a}(\vartheta_{r,\ell};\mathbf{T})$ denote the receive
and transmit array response vectors, where $\varpi_{r,\ell}$ and $\vartheta_{r,\ell}$ denote the corresponding angle of arrival/departure~(AoA/AoD), respectively.
The transmit field-response vector is
\vspace{-0.4em}
\begin{equation}
\big[\mathbf{a}(\vartheta;\mathbf{T})\big]_{m}=e^{\mathrm{j}2\pi\mathbf{t}_{m}^{\top}\mathbf{k}(\vartheta)},
\label{frv}
\end{equation}
where $\mathbf{k}(\vartheta)=[\cos\vartheta,\ \sin\vartheta]^{\top}$ denotes the normalized wave vector.
By applying channel estimation for the legitimate link, Alice acquires the path gains and angles and therefore predicts $\mathbf{H}_{b}(\mathbf{T})$ for an arbitrary APV, whereas the distribution of the departure angles is the sole knowledge of the wiretap link at Alice.

The analog beamforming vector $\mathbf{w}\in\mathbb{C}^{M}$ behind the RF chain satisfies $\|\mathbf{w}\|=1$.
The transmitted signal at the $k$-th channel use is $\mathbf{s}_{k}=\sqrt{P}\,\mathbf{w}\,x_{k}$ with the transmit power $P$, and the received signals are
\vspace{-0.4em}
\begin{equation}
\mathbf{y}_{r,k}=\sqrt{P}\,\mathbf{H}_{r}(\mathbf{T})\,\mathbf{w}\,x_{k}+\mathbf{n}_{r,k},\quad r\in\{b,w\},
\label{tx_rx}
\end{equation}
where $\mathbf{n}_{b,k}\sim\mathcal{CN}(\mathbf{0},\sigma_{b}^{2}\mathbf{I}_{N})$ and $\mathbf{n}_{w,k}\sim\mathcal{CN}(\mathbf{0},\sigma_{w}^{2}\mathbf{I}_{M_{w}})$ denote the AWGN at Bob and Willie, respectively.
Maximum-ratio combining~(MRC) is performed at Bob and at the $M_{w}$-antenna Willie, which yields the equivalent scalar channels
\vspace{-0.4em}
\begin{equation}
\begin{aligned}
y_{b,k}&=\sqrt{\gamma_{b}}\,x_{k}+n_{b,k},\quad
\gamma_{b}=\frac{P\,\|\mathbf{H}_{b}(\mathbf{T})\mathbf{w}\|^{2}}{\sigma_{b}^{2}},\\
z_{k}&=\sqrt{\gamma_{w}}\,x_{k}+n_{w,k},\quad
\gamma_{w}=\frac{P\,\|\mathbf{H}_{w}(\mathbf{T})\mathbf{w}\|^{2}}{\sigma_{w}^{2}},
\end{aligned}
\label{scalar_ch}
\end{equation}
where $n_{b,k}$ and $n_{w,k}$ follow $\mathcal{CN}(0,1)$ while $\gamma_{b}$ and $\gamma_{w}$ denote the receive signal-to-noise ratios (SNRs) at Bob and Willie.

Since $\gamma_{w}$ involves the wiretap channel that is unknown to Alice, it is separated into a deterministic part controlled by Alice and a normalized random part whose distribution depends on the APV.
The array gain toward $\vartheta$ is defined as
\vspace{-0.4em}
\begin{equation}
\Upsilon(\vartheta;\mathbf{T})=\big|\mathbf{a}^{H}(\vartheta;\mathbf{T})\mathbf{w}\big|^{2}.
\label{array_gain}
\end{equation}
The average received energy at Willie is governed by the expectation of $\Upsilon(\vartheta;\mathbf{T})$ over the distribution of the departure angles.
Accordingly, we define
\vspace{-0.4em}
\begin{equation}
\bar{P}=\frac{P\,M_{w}\,\mathbb{E}_{\vartheta}[\Upsilon(\vartheta;\mathbf{T})]}{\sigma_{w}^{2}},\quad
U=\frac{\gamma_{w}}{\bar{P}},\quad
\Delta=\frac{\gamma_{b}}{\bar{P}},
\label{avg_snr}
\end{equation}
where $\bar{P}$ denotes the average $\gamma_{w}$ under the control of Alice, $U$ denotes the normalized fading factor with $\mathbb{E}[U]=1$ whose distribution is shaped by $\Upsilon(\vartheta;\mathbf{T})$ over the departure angles, and $\Delta$ characterizes the SNR advantage of the legitimate link over the wiretap link.
Since $U$ is random, the power control reserves an outage margin.
For a given outage tolerance $\delta\in(0,1)$, the $(1-\delta)$-quantile $u_{1-\delta}$ of $U$ satisfies
\vspace{-0.4em}
\begin{equation}
\Pr\big\{\gamma_{w}\le\bar{P}u_{1-\delta}\big\}\ge 1-\delta.
\label{outage}
\end{equation}


\subsection{Binary Hypothesis Testing at Willie}
\vspace{-0.3em}

Willie performs a binary hypothesis test to decide whether Alice is transmitting, where the receiver structure and the observation window are unknown to Alice and the noise power at Willie drifts with the receiver gain and the co-channel radiation.
Accordingly, the detection problem at Willie is formulated as the composite test
\vspace{-0.2em}
\begin{equation}
\begin{aligned}
H_{0}&:\ \mathbf{z}\sim\mathcal{CN}(\mathbf{0},s\mathbf{I}_{n}),\quad s>0,\\
H_{1}&:\ z_{k}=\sqrt{\bar{P}U}\,x_{k}+n_{w,k},\quad k=1,\dots,n,
\end{aligned}
\label{hypotheses}
\end{equation}
where $\{\mathcal{CN}(\mathbf{0},s\mathbf{I}_{n})\}_{s>0}$ constitutes the composite null hypothesis induced by the unknown noise level $s$.
Since learning-based detectors have enhanced the detection capability of Willie beyond the conventional energy detection~\cite{qian2025adversarial}, the covertness guarantee in the sequel is established against the optimal test under the composite null hypothesis rather than against a specific detector.

Moreover, the detectability is quantified by the session leakage
\vspace{-0.4em}
\begin{equation}
\mathcal{L}_{n}=\inf_{s>0}D\big(\mathbb{P}^{(n)}_{1}\,\big\|\,\mathcal{CN}(\mathbf{0},s\mathbf{I}_{n})\big),
\label{leakage}
\end{equation}
where $\mathbb{P}^{(n)}_{1}$ denotes the joint distribution of the $n$ observations under $H_{1}$.
Let $\mathcal{D}$ denote the decision region at Willie. The worst-case sum of the false-alarm and missed-detection probabilities over the noise level $s$ satisfies
\vspace{-0.4em}
\begin{equation}
\xi^{\star}=\inf_{\mathcal{D}}\sup_{s>0}\Big[\mathcal{CN}(\mathbf{0},s\mathbf{I}_{n})(\mathcal{D})+\mathbb{P}^{(n)}_{1}(\mathcal{D}^{c})\Big]\ge 1-\sqrt{\mathcal{L}_{n}/2},
\label{minimax_bound}
\end{equation}
where $\xi^{\star}$ denotes the minimax risk of Willie and the inequality follows from the minimax inequality together with the Pinsker inequality.
Given a covertness budget $\epsilon\in(0,1)$, the covertness constraint is imposed as
\vspace{-0.4em}
\begin{equation}
\mathcal{L}_{n}\le 2\epsilon^{2},
\label{covert_constraint}
\end{equation}
which guarantees the minimum risk $\xi^{\star}\ge 1-\epsilon$~\cite{bash2013limits}.
Owing to the randomness of $U$, the constraint~\eqref{covert_constraint} is required to hold with probability at least $1-\delta$.


\section{Problem Formulation and Covert Rate Optimization}\label{sec: optimization}


\subsection{Problem Formulation}\label{sec: problem}
\vspace{-0.3em}

We aim to minimize the end-to-end semantic distortion under the covertness requirement by jointly optimizing all transceiver variables introduced in Section~\ref{sec: system_model}.
The optimization is formulated as
\vspace{-0.4em}
\begin{equation}
\begin{aligned}
(\mathcal{P}1):\ \min_{\substack{\varphi,\theta,\psi,\mathcal{X},\mu\\ \mathbf{T},\mathbf{w},P}}\quad
&\mathbb{E}\big[\varrho_{\mathrm{o}}(S,\widehat{S})\big]+\beta\,\mathbb{E}\big[\bar{\varrho}_{\mathrm{s}}(S,\widehat{\Theta})\big]\\
\mathrm{s.t.}\quad
&\Pr\big\{\mathcal{L}_{n}\le 2\epsilon^{2}\big\}\ge 1-\delta,\\
&\mathbf{T}\in\mathcal{T},\quad \|\mathbf{w}\|=1,\\
&0\le P\le P_{\max},\quad \max_{a\in\mathcal{X}}|a|^{2}\le\Pi,
\end{aligned}
\label{joint_problem}
\end{equation}
where $P_{\max}$ denotes the transmit power budget and $\Pi$ denotes the peak power constraint induced by the linear region of the power amplifier.
Due to the unit average power of $\mu$, the term $\max_{a\in\mathcal{X}}|a|^{2}$ equals the peak-to-average power ratio~(PAPR).

The main difficulty for solving $(\mathcal{P}1)$ lies in the covertness constraint, which involves the joint distribution of the $n$ observations and couples the symbol distribution $\mu$ with the APV through the detectability at Willie.


\subsection{Covert Rate Characterization}\label{sec: covert_rate}
\vspace{-0.3em}

The structured mapping $\mathcal{M}$ together with the block-wise refreshed pseudo-random sequence enforces independent and identically distributed (i.i.d.) symbols with the common distribution $\mu$ over the session, under which the session leakage in~\eqref{leakage} reduces to a single-letter expression over the equivalent scalar channel in~\eqref{scalar_ch}.
Different from Gaussian signaling in existing analyses where the received distribution at Willie is specified by the received power~\cite{yan2019gaussian}, the finite constellation $\mathcal{X}$ yields a received distribution that differs in shape from $\mathcal{CN}(0,s)$ for $s>0$.

Let $Y=\sqrt{\gamma}\,X+W$ denote the per-symbol input-output relationship of the equivalent scalar channel at an arbitrary receive SNR $\gamma$ with $X\sim\mu$ and $W\sim\mathcal{CN}(0,1)$, and $\mathbb{P}_{Y}$ denote the distribution of $Y$.
Since the composite null hypothesis in~\eqref{hypotheses} is invariant to a common scaling of the observations, the normalized observation $Z=Y/\sqrt{1+\gamma}$ with $\mathbb{E}[|Z|^{2}]=1$ is considered, which correlates with $X$ as
\vspace{-0.4em}
\begin{equation}
\mathbb{E}\big[ZX^{*}\big]=\sqrt{\gamma/(1+\gamma)}=\sqrt{\alpha},
\label{alpha_def}
\end{equation}
where $\alpha\in(0,1)$ measures the signal component contained in $Z$ and approaches $\gamma$ at the low SNR region.
The following lemma expresses the per-symbol leakage in closed-form.

\begin{lemma}[\textbf{Non-Gaussianness of the Finite-Constellation Output}]\label{lemma_ng}
For the channel $Y=\sqrt{\gamma}\,X+W$ with $X\sim\mu$ and $W\sim\mathcal{CN}(0,1)$, the minimum divergence between $\mathbb{P}_{Y}$ and the composite null hypothesis is given by
\vspace{-0.4em}
\begin{equation}
\Psi_{\mu}(\gamma)=\inf_{s>0}D\big(\mathbb{P}_{Y}\,\big\|\,\mathcal{CN}(0,s)\big)=\log(1+\gamma)-I_{\mu}(\gamma),
\label{non_gaussianness}
\end{equation}
where $I_{\mu}(\gamma)=I(X;Y)$ under the input $\mu$.
$\Psi_{\mu}(\gamma)$ is strictly increasing in $\gamma\ge 0$ with $\Psi_{\mu}(0)=0$ for non-Gaussian $\mu$.
\end{lemma}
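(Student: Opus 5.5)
The plan is to compute $D(\mathbb{P}_{Y}\|\mathcal{CN}(0,s))$ explicitly for each $s>0$, minimize over $s$ in closed form, and then identify the minimizer's value with $h(\text{Gaussian})-h(Y)$. The monotonicity claim will then follow from a data-processing argument.

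\textbf{Step 1: the identity.} For any $s>0$, using the log-density of $\mathcal{CN}(0,s)$, I will write
\begin{equation*}
D\big(\mathbb{P}_{Y}\,\big\|\,\mathcal{CN}(0,s)\big)=-h(Y)+\log(\pi s)+\frac{\mathbb{E}[|Y|^{2}]}{s}.
\end{equation*}
Since $\mathbb{E}[|Y|^{2}]=1+\gamma$ by unit average power and independence of $X$ and $W$, this is a strictly convex function of $\log s$. It is minimized at $s=1+\gamma$, where it equals $\log(\pi e(1+\gamma))-h(Y)$.

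Next, $h(Y|X)=h(W)=\log(\pi e)$. Therefore
\begin{equation*}
h(Y)=I_{\mu}(\gamma)+\log(\pi e),
\end{equation*}
and substituting gives $\Psi_{\mu}(\gamma)=\log(1+\gamma)-I_{\mu}(\gamma)$. The quantity $h(Y)$ is finite because $\mathbb{P}_{Y}$ is a finite Gaussian mixture. At $\gamma=0$ we have $Y=W$, so both terms vanish and $\Psi_{\mu}(0)=0$.

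\textbf{Step 2: monotonicity.} The cleanest route is to express $\Psi_{\mu}(\gamma)=D(\mathbb{P}_{Z}\|\mathcal{CN}(0,1))$ in terms of the normalized output $Z=Y/\sqrt{1+\gamma}$, using scale invariance of KL divergence. For $\gamma_{1}<\gamma_{2}$, the observation $Y_{1}$ is a degraded version of $Y_{2}$. Concretely, $Y_{1}/\sqrt{\gamma_{1}}=Y_{2}/\sqrt{\gamma_{2}}+W'$ with independent Gaussian $W'$ of variance $1/\gamma_{1}-1/\gamma_{2}$.

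The same channel maps the Gaussian reference $\mathcal{CN}(0,1+\gamma_{2})$, after scaling by $\sqrt{\gamma_1/\gamma_2}$ and adding noise of variance $\gamma_1(1/\gamma_1-1/\gamma_2)$, to $\mathcal{CN}(0,1+\gamma_{1})$. Indeed, the resulting variance is $\gamma_{1}(1+\gamma_{2})/\gamma_{2}+1-\gamma_{1}/\gamma_{2}=1+\gamma_{1}$. The data-processing inequality then gives $\Psi_{\mu}(\gamma_{1})\le\Psi_{\mu}(\gamma_{2})$, so $\Psi_{\mu}$ is nondecreasing.

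\textbf{Step 3: strictness (the main obstacle).} Data processing yields only a weak inequality, and strictness needs non-Gaussianity of $\mu$. I would first try the derivative route via the I-MMSE relation $I_{\mu}'(\gamma)=\mathrm{mmse}_{\mu}(\gamma)$, which gives
\begin{equation*}
\Psi_{\mu}'(\gamma)=\frac{1}{1+\gamma}-\mathrm{mmse}_{\mu}(\gamma).
\end{equation*}
Since $1/(1+\gamma)$ is the linear MMSE for a unit-power input, $\Psi_{\mu}'(\gamma)\ge 0$, with equality if and only if the conditional mean $\mathbb{E}[X|Y]$ is linear in $Y$.

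For a finite-constellation $\mu$ and any $\gamma>0$, $\mathbb{E}[X|Y]$ is a bounded function of $Y$ (bounded by $\max_{a\in\mathcal{X}}|a|$). It therefore cannot equal the unbounded linear map $\sqrt{\gamma}Y/(1+\gamma)$ on a set of positive measure. Hence $\Psi_{\mu}'(\gamma)>0$ for every $\gamma>0$, which yields strict increase on $\gamma\ge0$. The derivative vanishes at $\gamma=0$, but that does not affect strict monotonicity.

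The technical points to verify are twofold. First, the I-MMSE formula applies to arbitrary finite-power inputs, which holds by Guo--Shamai--Verd\'u. Second, the equality case of the linear-MMSE bound is correctly characterized, which follows from uniqueness of the $L^{2}$ projection.
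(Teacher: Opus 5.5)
Your proposal is correct and follows the paper's proof essentially step for step: you minimize in closed form at $s^{\star}=1+\gamma$, use $h(Y)=\log(\pi e)+I_{\mu}(\gamma)$, and then compare the I-MMSE derivative $1/(1+\gamma)-\mathrm{mmse}_{\mu}(\gamma)$ against the linear-MMSE bound. Your data-processing step is redundant given Step 3, and your boundedness argument for $\mathbb{E}[X\mid Y]$ supplies the strictness for $\gamma>0$ that the paper only asserts by noting that equality holds for Gaussian inputs.
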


\begin{IEEEproof}
See Appendix~\ref{app:lemmas}.
\end{IEEEproof}
\vspace{-0.25em}

The quantity $\Psi_{\mu}$ is termed the output non-Gaussianness and enters both receivers.
Specifically, $\Psi_{\mu}(\gamma_{w})$ is the per-symbol leakage at Willie while $\log(1+\gamma_{b})-\Psi_{\mu}(\gamma_{b})$ is the achievable rate at Bob under the input distribution $\mu$.
A smaller $\Psi_{\mu}$ therefore permits a higher transmit power and a smaller rate loss.

Under the covertness constraint~\eqref{covert_constraint}, each channel use receives a leakage budget of $2\epsilon^{2}/n$ that vanishes as the session length $n$ grows.
As a result, the design is governed by the low-SNR behavior of $\Psi_{\mu}$ determined by the moments of $\mu$.
Let $H_{p,q}$ denote the complex Hermite polynomials that form an orthogonal basis under $\mathcal{CN}(0,1)$ and satisfy $\mathbb{E}[H_{p,q}(X')]=0$ for $X'\sim\mathcal{CN}(0,1)$ and $p+q\ge 1$.
The moment matching order of $\mu$ is defined as
\vspace{-0.4em}
\begin{equation}
t=\max\big\{\tau:\ \mathbb{E}_{\mu}[H_{p,q}(X)]=0,\ \forall\, 1\le p+q\le\tau\big\},
\label{matching_order}
\end{equation}
which indicates that all moments of $\mu$ up to order $t$ coincide with the corresponding moments of $\mathcal{CN}(0,1)$, and the lowest-order nonzero Hermite moments are collected into
\vspace{-0.4em}
\begin{equation}
\kappa_{t}=\frac{1}{2}\sum_{p+q=t+1}\frac{\big|\mathbb{E}_{\mu}[H_{p,q}(X)]\big|^{2}}{p!\,q!}.
\label{kappa_def}
\end{equation}
The following lemma characterizes the low-SNR behavior of the output non-Gaussianness.

\begin{lemma}[\textbf{Low-SNR Expansion of the Output Non-Gaussianness}]\label{lemma_expansion}
For a zero-mean constellation distribution $\mu$ with moment matching order $t$ and coefficient $\kappa_{t}$, the output non-Gaussianness admits the expansion
\vspace{-0.4em}
\begin{equation}
\Psi_{\mu}(\gamma)=\kappa_{t}\,\alpha^{t+1}\big(1+O(\alpha)\big).
\label{low_snr}
\end{equation}
\end{lemma}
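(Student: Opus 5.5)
By Lemma~\ref{lemma_ng}, $\Psi_{\mu}(\gamma)=D(\mathbb{P}_{Y}\|\mathcal{CN}(0,1+\gamma))$, because the infimum over $s$ is attained at the matched power $s=1+\gamma$. Divergence is invariant under the common scaling $Z=Y/\sqrt{1+\gamma}$, so the task reduces to expanding $D(\mathbb{P}_{Z}\|\mathcal{CN}(0,1))$ in $\alpha$. Write $Z=\sqrt{\alpha}X+\sqrt{1-\alpha}\,W$ and let $\phi$ denote the $\mathcal{CN}(0,1)$ density. The plan is to expand the likelihood ratio $\ell(z)=p_{Z}(z)/\phi(z)$ in the complex Hermite basis via the Mehler (Ornstein--Uhlenbeck) kernel. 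For fixed $x$, the conditional density of $Z$ given $X=x$ is $\mathcal{CN}(\sqrt{\alpha}x,1-\alpha)$, and its ratio to $\phi$ is
\begin{equation*}
\sum_{p,q\ge 0}\frac{\alpha^{(p+q)/2}}{p!\,q!}\,H_{p,q}(x)^{*}H_{p,q}(z),
\end{equation*}
up to the conjugation convention fixed by the normalization $\mathbb{E}|H_{p,q}|^{2}=p!\,q!$. Averaging over $X\sim\mu$ gives
\begin{equation*}
\ell(z)=1+\sum_{p+q\ge t+1}\frac{\alpha^{(p+q)/2}}{p!\,q!}\,\overline{m_{p,q}}\,H_{p,q}(z),\qquad m_{p,q}=\mathbb{E}_{\mu}[H_{p,q}(X)].
\end{equation*}
All terms with $1\le p+q\le t$ vanish by the definition of the matching order $t$ in~\eqref{matching_order}.

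Next I would use $D=\mathbb{E}_{\phi}[\ell\log\ell]$ together with the expansion $\ell\log\ell=(\ell-1)+\tfrac12(\ell-1)^{2}+O(|\ell-1|^{3})$. The linear term integrates to zero. By orthogonality, the quadratic term equals
\begin{equation*}
\tfrac12\chi^{2}=\tfrac12\sum_{p+q\ge t+1}\alpha^{p+q}\,\frac{|m_{p,q}|^{2}}{p!\,q!}.
\end{equation*}
Its leading part is exactly $\kappa_{t}\alpha^{t+1}$, with $\kappa_{t}$ as in~\eqref{kappa_def}, and the tail from $p+q\ge t+2$ is $O(\alpha^{t+2})$. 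That tail series converges for small $\alpha$ because the constellation is finite, so $|m_{p,q}|$ grows at most like $C^{p+q}\sqrt{p!\,q!}$ and the weights $\alpha^{p+q}$ dominate. This yields $\tfrac12\chi^{2}=\kappa_{t}\alpha^{t+1}(1+O(\alpha))$.

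The main obstacle is controlling the cubic remainder uniformly in $z$, since $\ell-1$ is only small pointwise on compacts and the Hermite series is not uniformly bounded. Since only an $O(\alpha^{t+2})$ bound is needed, I would first try the standard two-sided bound $\chi^{2}-D\le\mathbb{E}_{\phi}|\ell-1|^{3}$, or equivalently use $\ell\log\ell-(\ell-1)-\tfrac12(\ell-1)^{2}\le C|\ell-1|^{3}$ for $\ell$ bounded below. I would then control $\mathbb{E}_{\phi}|\ell-1|^{3}$ via hypercontractivity of the Ornstein--Uhlenbeck semigroup. Write $\ell-1=T_{\sqrt{\alpha}}g$ for a fixed function $g$, namely the ratio associated with the centered finite measure $\mu$ smoothed at a reference level. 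Alternatively, bound $\mathbb{E}|\ell-1|^{3}\le\|\ell-1\|_{4}^{3/2}\|\ell-1\|_{2}^{3/2}$ directly from the explicit Gaussian-mixture form of $p_{Z}$, where each $L^{r}$ norm is a finite Gaussian integral. Either route gives a remainder of order $\alpha^{3(t+1)/2}=o(\alpha^{t+2})$ once $t\ge 1$; the case $t=0$ is excluded by the zero-mean assumption.

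A lower bound $\ell\ge c>0$ is not available globally. I would therefore split the divergence integral into $\{|z|\le\alpha^{-1/4}\}$, where $\ell-1=O(\alpha^{1/2}\mathrm{poly}(|z|))$ is uniformly small, and its complement, whose Gaussian mass is $e^{-\alpha^{-1/2}}=o(\alpha^{k})$ for every $k$. Finally, substituting $\alpha=\gamma/(1+\gamma)$ gives the expansion in the stated form $\kappa_{t}\alpha^{t+1}(1+O(\alpha))$.
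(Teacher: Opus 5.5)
Your proposal is correct and follows the paper's proof essentially step for step. Both use the Mehler/Hermite expansion of the likelihood ratio, obtain the leading term $\tfrac12\chi^{2}=\kappa_{t}\alpha^{t+1}(1+O(\alpha))$ by orthogonality, Taylor-expand $(1+r)\log(1+r)$, and bound the remainder by hypercontractivity as $\mathbb{E}_{\phi}|\ell-1|^{3}=O(\alpha^{3(t+1)/2})$ with $3(t+1)/2\ge t+2$.

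A few small corrections:
\begin{itemize}
\item At $t=1$ (e.g., BPSK) $\alpha^{3(t+1)/2}=\alpha^{t+2}$, so the remainder is $O(\alpha^{t+2})$, not $o(\alpha^{t+2})$. This still suffices.
\item Your truncation at $|z|\le\alpha^{-1/4}$ is unnecessary. The bound $\bigl|(1+u)\log(1+u)-u-\tfrac{u^{2}}{2}\bigr|\le\tfrac12|u|^{3}$ holds for all $u\ge-1$, and $u=\ell-1\ge-1$ automatically, so the remainder estimate is already global.
\item The interpolation should read $\mathbb{E}|\ell-1|^{3}\le\|\ell-1\|_{2}\|\ell-1\|_{4}^{2}$; your exponents $3/2,3/2$ are invalid in general. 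The conclusion is unaffected because both norms are $O(\alpha^{(t+1)/2})$.
\end{itemize}
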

\vspace{-0.25em}
\begin{IEEEproof}
See Appendix~\ref{app:lemmas}.
\end{IEEEproof}
\vspace{-0.25em}

Lemma~\ref{lemma_expansion} reveals that the per-symbol leakage scales with the $(t+1)$-th power of $\alpha$.
Thus, a larger matching order permits a higher transmit power under the same budget, and we have the following theorem.

\begin{theorem}[\textbf{Maximum Covert Rate}]\label{theorem_rate}
Suppose that $\mathbf{T}$ and $\mathbf{w}$ remain fixed within one session and that the transmitted symbols are i.i.d. with distribution $\mu$ of moment matching order $t$.
The session leakage in~\eqref{leakage} equals $\mathcal{L}_{n}=n\Psi_{\mu}(\bar{P}U)$, and the covertness constraint~\eqref{covert_constraint} holds with probability no less than $1-\delta$ if and only if
\vspace{-0.4em}
\begin{equation}
\bar{P}\le\bar{P}^{\star}=\frac{1}{u_{1-\delta}}\Psi_{\mu}^{-1}\Big(\frac{2\epsilon^{2}}{n}\Big).
\label{max_power}
\end{equation}
Under $\bar{P}=\bar{P}^{\star}$, the legitimate SNR and the covert rate are
\vspace{-0.4em}
\begin{equation}
\begin{aligned}
\gamma_{b}^{\star}&=\Delta\,\bar{P}^{\star}=\frac{\Delta}{u_{1-\delta}}\Psi_{\mu}^{-1}\Big(\frac{2\epsilon^{2}}{n}\Big),\\
R_{\mathrm{c}}&=I_{\mu}\big(\gamma_{b}^{\star}\big)\sim\frac{\Delta}{u_{1-\delta}}\Big(\frac{2\epsilon^{2}}{\kappa_{t}\,n}\Big)^{\frac{1}{t+1}},
\end{aligned}
\label{max_snr_rate}
\end{equation}
where $R_{\mathrm{c}}$ denotes the covert rate per channel use and the asymptotic equivalence holds as $n\to\infty$.
\end{theorem}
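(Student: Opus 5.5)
The proof proceeds in three stages: single-letterize the session leakage, invert the resulting scalar constraint using the monotonicity in Lemma~\ref{lemma_ng} together with the quantile in~\eqref{outage}, and then expand the covert rate with Lemma~\ref{lemma_expansion}.

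\textbf{Single-letterization.} Conditioned on $U$, with $\mathbf{T}$ and $\mathbf{w}$ fixed over the session, the observations $z_1,\dots,z_n$ are i.i.d. with common law $\mathbb{P}_Y$ at SNR $\gamma=\bar{P}U$. Every member of the null family $\mathcal{CN}(\mathbf{0},s\mathbf{I}_n)$ is a product measure. The chain rule for KL divergence between product measures therefore gives
\[
D\big(\mathbb{P}_{1}^{(n)}\,\big\|\,\mathcal{CN}(\mathbf{0},s\mathbf{I}_n)\big)=n\,D\big(\mathbb{P}_Y\,\big\|\,\mathcal{CN}(0,s)\big).
\]
Since the same scalar $s$ appears in every factor, the infimum over $s$ passes through the factor $n$. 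By Lemma~\ref{lemma_ng} this yields $\mathcal{L}_n=n\Psi_\mu(\bar{P}U)$.

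\textbf{Power threshold.} By Lemma~\ref{lemma_ng}, $\Psi_\mu$ is continuous and strictly increasing with $\Psi_\mu(0)=0$, so $\Psi_\mu^{-1}$ is well defined near $0$. The event $\{\mathcal{L}_n\le 2\epsilon^2\}$ is therefore exactly the event
\[
\Big\{U\le \bar{P}^{-1}\Psi_\mu^{-1}\big(2\epsilon^2/n\big)\Big\}.
\]
Its probability is at least $1-\delta$ if and only if this threshold is at least the $(1-\delta)$-quantile $u_{1-\delta}$, which is equivalent to~\eqref{max_power}.

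Here I will assume $u_{1-\delta}$ is the smallest $u$ with $\Pr\{U\le u\}\ge1-\delta$. With that definition the equivalence is exact: the direction ``if'' follows from~\eqref{outage}, and ``only if'' follows from minimality of the quantile. I will also note the implicit requirement that $2\epsilon^2/n$ lie in the range of $\Psi_\mu$, which holds for large $n$.

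Substituting $\bar{P}=\bar{P}^\star$ into $\gamma_b=\Delta\bar{P}$ from~\eqref{avg_snr} gives $\gamma_b^\star$ directly. The rate at Bob is $I_\mu(\gamma_b^\star)$.

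\textbf{Asymptotics and the main obstacle.} The main work is the asymptotic equivalence, which needs two expansions. First, invert Lemma~\ref{lemma_expansion}. Since $\alpha=\gamma/(1+\gamma)=\gamma(1+O(\gamma))$, we have $\Psi_\mu(\gamma)=\kappa_t\gamma^{t+1}(1+O(\gamma))$. Setting this equal to $x=2\epsilon^2/n\to0$ gives
\[
\Psi_\mu^{-1}(x)=(x/\kappa_t)^{1/(t+1)}\big(1+o(1)\big).
\]
Justifying this inversion requires $\kappa_t>0$, i.e., $t$ finite because $\mu$ is non-Gaussian. It also requires a sandwich argument: for any $\eta>0$ and all small $\gamma$, $(1-\eta)\kappa_t\gamma^{t+1}\le\Psi_\mu(\gamma)\le(1+\eta)\kappa_t\gamma^{t+1}$, and applying the monotone inverse to these bounds squeezes $\Psi_\mu^{-1}(x)$.

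Second, $\gamma_b^\star\to0$ as $n\to\infty$, treating $\Delta$ and $u_{1-\delta}$ as fixed. Using $I_\mu(\gamma)=\log(1+\gamma)-\Psi_\mu(\gamma)=\gamma-O(\gamma^2)$, we get $I_\mu(\gamma)\sim\gamma$, where $\Psi_\mu=O(\gamma^{t+1})$ with $t\ge1$ because the zero-mean, unit-power normalization gives $t\ge1$. Hence $R_c\sim\gamma_b^\star\sim\frac{\Delta}{u_{1-\delta}}\big(\frac{2\epsilon^2}{\kappa_t n}\big)^{1/(t+1)}$, as claimed in~\eqref{max_snr_rate}.

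I expect the delicate points to be the uniform control of the $O(\alpha)$ remainder in Lemma~\ref{lemma_expansion}, which the inversion needs, and the exact quantile convention needed for the ``only if'' direction.
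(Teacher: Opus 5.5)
Your proposal is correct and follows the paper's proof essentially step for step. It single-letterizes through additivity of the divergence over product measures, uses the quantile argument with $u_{1-\delta}=\inf\{x:F_U(x)\ge 1-\delta\}$ and right-continuity of $F_U$, and inverts the low-SNR expansion of Lemma~\ref{lemma_expansion} before applying $I_\mu(\gamma)=\gamma+O(\gamma^2)$. The differences are minor: the paper records the sharper remainder $\Psi_\mu^{-1}(x)=(x/\kappa_t)^{1/(t+1)}\bigl(1+O(x^{1/(t+1)})\bigr)$, which it reuses in Corollary~\ref{cor_place}, whereas your sandwich gives only $1+o(1)$ (enough for the stated equivalence); and neither of your flagged worries is a real obstacle, since $\mu$ is fixed (no uniformity in Lemma~\ref{lemma_expansion} is needed) and $\Psi_\mu(\gamma)\ge\log(1+\gamma)-\log K\to\infty$ makes $\Psi_\mu^{-1}$ defined on all of $[0,\infty)$ for every $n$.
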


\begin{IEEEproof}
See Appendix~\ref{app:theorem_rate}.
\end{IEEEproof}
\vspace{-0.25em}

Theorem~\ref{theorem_rate} transforms the intractable covertness constraint in $(\mathcal{P}1)$ into the explicit power limit in~\eqref{max_power}, and the resulting rate expression is given by~\eqref{max_snr_rate}.
It is seen that the covert rate decays as $n^{-1/(t+1)}$ while the total delivered information $nR_{\mathrm{c}}$ grows as $n^{t/(t+1)}$.
For $t=1$, the growth reduces to the classic square-root law in covert communication~\cite{bash2013limits}.
Moreover, $\gamma_{b}^{\star}$ factorizes into the constellation term $\Psi_{\mu}^{-1}(2\epsilon^{2}/n)$ and the antenna term $\Delta/u_{1-\delta}$, which allows the constellation and the APV to be designed separately, as shown in the following subsections.


\begin{figure}[t]
    \centering
    \subfigure[Constellations with matching orders.]{%
        \includegraphics[width=.5\linewidth]{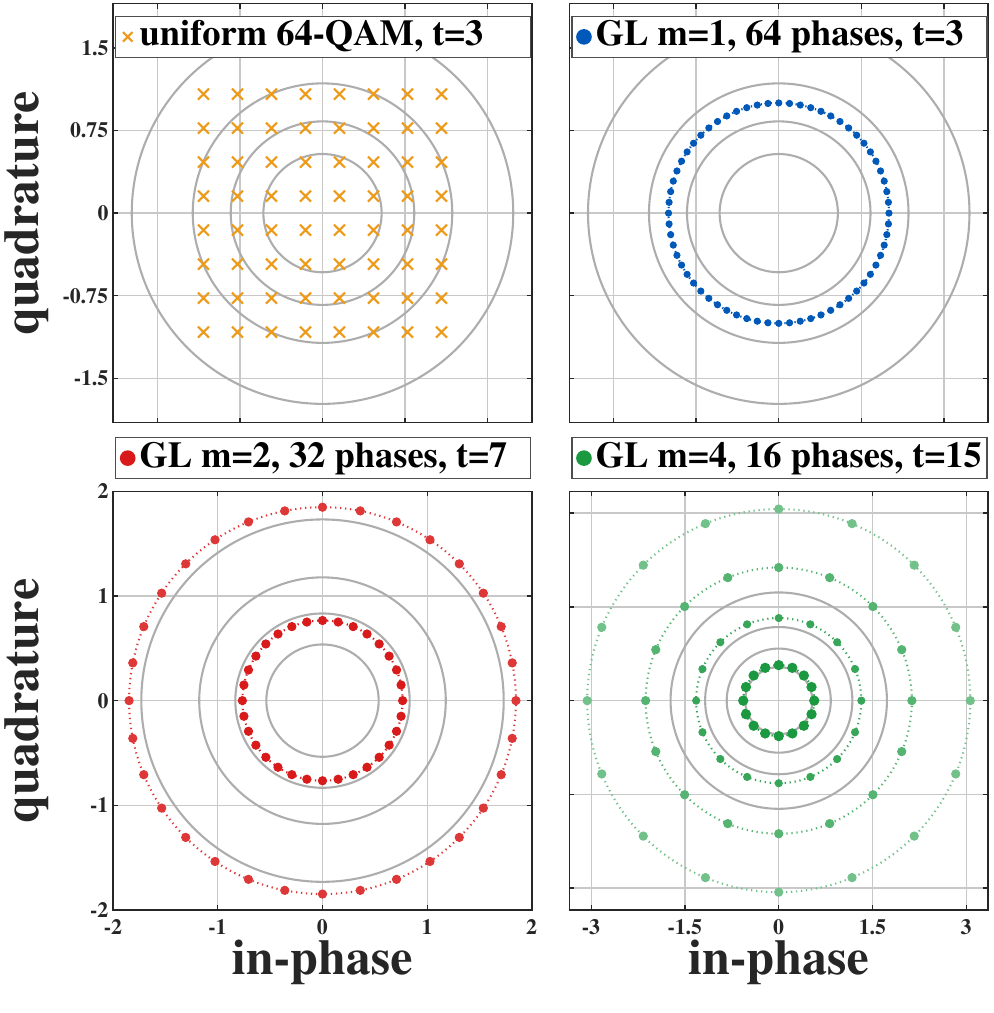}%
        \label{constellation}%
    }%
    \hfill%
    \subfigure[MA positions in two sessions with $d_{\min}=\lambda/2$.]{%
        \includegraphics[width=.5\linewidth]{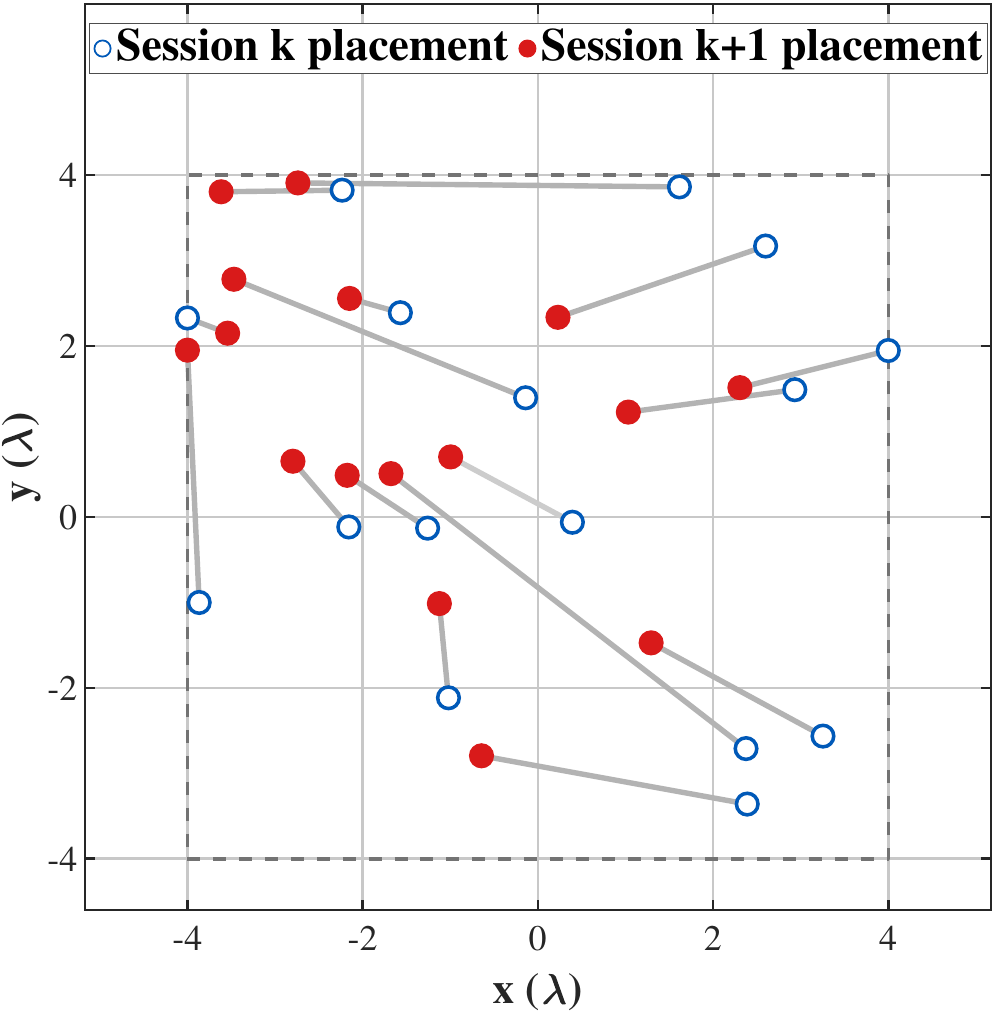}%
        \label{array_sessions}%
    }%
    \vspace{-0.5em}
    \caption{Designed constellations on the quantile circles of $\mathcal{CN}(0,1)$ and the MA positions of two consecutive sessions within the region $\mathcal{A}$.}
    \label{fig:const_array}
    \vspace{-1.5em}
\end{figure}

\subsection{Gaussian-Matched Constellation Design}\label{sec: constellation_design}
\vspace{-0.3em}

To maximize the constellation term $\Psi_{\mu}^{-1}(2\epsilon^{2}/n)$ in~\eqref{max_snr_rate} under the PAPR constraint in $(\mathcal{P}1)$, the matching order $t$ is raised as far as the PAPR constraint permits, since $\Psi_{\mu}^{-1}(2\epsilon^{2}/n)$ grows with $t$ by~\eqref{low_snr} for the vanishing budget $2\epsilon^{2}/n$.
Since the composite null hypothesis in~\eqref{hypotheses} is invariant to a common rotation of the observations while nonzero moments $\mathbb{E}_{\mu}[x^{p}(x^{*})^{q}]$ with $p\ne q$ are exposed by the phase correlation of the received samples regardless of $s$, the symbol distribution is designed with discrete rotational symmetry under which the phase and the amplitude are matched separately.

To this end, we design a ring constellation with $m$ rings and $M_{\mathrm{ph}}$ equispaced phases per ring as
\vspace{-0.4em}
\begin{equation}
\mathcal{X}_{m}=\big\{\sqrt{\rho_{k}}\,e^{\mathrm{j}2\pi\nu/M_{\mathrm{ph}}}\big\}_{k\le m,\ \nu<M_{\mathrm{ph}}},\quad K=mM_{\mathrm{ph}},
\label{ring_const}
\end{equation}
where $\rho_{k}$ denotes the squared radius of the $k$-th ring and the points on the $k$-th ring are assigned equal probabilities, with $\omega_{k}$ denoting the probability of the $k$-th ring.
The design variables are thereby reduced from the $2K$ coordinates of $\mathcal{X}$ to the $m$ pairs $(\rho_{k},\omega_{k})$, which satisfy the unit average power and the PAPR constraints in $(\mathcal{P}1)$ as
\vspace{-0.4em}
\begin{equation}
\sum_{k=1}^{m}\omega_{k}\rho_{k}=1,\quad
\max_{a\in\mathcal{X}_{m}}|a|^{2}=\max_{k}\rho_{k}\le\Pi.
\label{ring_papr}
\end{equation}
The phase sum over the equispaced phases equals the geometric series $\frac{1}{M_{\mathrm{ph}}}\sum_{\nu=0}^{M_{\mathrm{ph}}-1}e^{\mathrm{j}2\pi\nu(p-q)/M_{\mathrm{ph}}}=\mathbf{1}\{M_{\mathrm{ph}}\mid(p-q)\}$, which factorizes every moment of $\mathcal{X}_{m}$ as
\vspace{-0.4em}
\begin{equation}
\mathbb{E}_{\mu}\big[x^{p}(x^{*})^{q}\big]
=\Big(\sum_{k=1}^{m}\omega_{k}\rho_{k}^{\frac{p+q}{2}}\Big)\mathbf{1}\Big\{\frac{p-q}{M_{\mathrm{ph}}}\in\mathbb{Z}\Big\}.
\label{ring_moment}
\end{equation}
All off-diagonal moments with $0<|p-q|<M_{\mathrm{ph}}$ vanish as in $\mathcal{CN}(0,1)$.
Then, the matching condition reduces to
\vspace{-0.4em}
\begin{equation}
\sum_{k=1}^{m}\omega_{k}\rho_{k}^{\,j}=\int_{0}^{\infty}e^{-\rho}\rho^{j}\,\mathrm{d}\rho=j!,\quad j=1,\dots,\Big\lfloor\frac{t}{2}\Big\rfloor,
\label{ring_quad}
\end{equation}
which contains the unit power constraint as the case $j=1$ and is an $m$-node quadrature rule for the weight function $e^{-\rho}$, where $(\rho_{k},\omega_{k})$ are determined in the following proposition to attain the highest matching order.

\begin{proposition}[\textbf{Maximum Matching Order of $m$-Ring Constellations}]\label{prop_gl}
For the ring constellation in~\eqref{ring_const} with $M_{\mathrm{ph}}\ge 4m$ under the PAPR constraint in~\eqref{ring_papr}, the moment matching order satisfies $t\le 4m-1$, and the bound is attained when $(\rho_{k},\omega_{k})$ are the nodes and weights of the $m$-point GL quadrature for the weight function $e^{-\rho}$ and the number of rings is the largest $m$ with $\max_{k}\rho_{k}\le\Pi$.
\end{proposition}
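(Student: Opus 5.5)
The plan is to reduce the matching condition to a classical Gaussian-quadrature exactness question for the weight $e^{-\rho}$ on $[0,\infty)$. By~\eqref{ring_moment}, all Hermite moments $\mathbb{E}_{\mu}[H_{p,q}(X)]$ with $1\le p+q\le\tau$ are linear combinations of the monomial moments $\mathbb{E}_{\mu}[x^{p'}(x^{*})^{q'}]$ with $p'+q'\le\tau$. Every such moment with $p'\ne q'$ vanishes as soon as $|p'-q'|<M_{\mathrm{ph}}$. Since $|p'-q'|\le\tau$, the assumption $M_{\mathrm{ph}}\ge 4m$ guarantees this for all $\tau\le 4m-1$, and indeed for $\tau=4m$ as well.

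Hence matching up to order $\tau$ is equivalent to the diagonal conditions $\mathbb{E}_{\mu}[|x|^{2j}]=j!=\mathbb{E}[|X'|^{2j}]$ for $2j\le\tau$, which is exactly~\eqref{ring_quad}. This uses the fact that $|X'|^{2}$ is $\mathrm{Exp}(1)$ for $X'\sim\mathcal{CN}(0,1)$, and that the Hermite system is triangular with respect to monomials, so vanishing of all $H_{p,q}$ moments up to order $\tau$ is equivalent to agreement of all monomial moments up to order $\tau$. Odd total orders $p+q=2j+1$ involve only off-diagonal terms, which already vanish, so the order $t$ is determined by the largest $J$ with $\sum_{k}\omega_{k}\rho_{k}^{j}=j!$ for all $j\le J$, giving $t\ge 2J+1$.

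\textbf{Upper bound.} I will argue that an $m$-node rule with nonnegative weights cannot integrate all polynomials of degree up to $2m$ exactly against $e^{-\rho}$. Take $\pi(\rho)=\prod_{k}(\rho-\rho_{k})^{2}$, a polynomial of degree $2m$. The rule gives $\sum_{k}\omega_{k}\pi(\rho_{k})=0$, while $\int_{0}^{\infty}\pi(\rho)e^{-\rho}\,\mathrm{d}\rho>0$. Therefore matching fails for $j=2m$, so $J\le 2m-1$, and the $p=q=2m$ moment (order $4m$) mismatches. This gives $t\le 4m-1$.

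Here the condition $M_{\mathrm{ph}}\ge 4m$ is what ensures the first failing Hermite moment is the diagonal one of order $4m$, rather than an earlier off-diagonal one. When counting, I must take care that the order index is $p+q$, so degree $j$ in $\rho$ corresponds to order $2j$. This is the step I expect to be the main obstacle: the mismatch between the paper's $\lfloor t/2\rfloor$ indexing and the quadrature degree $2m-1$ has to be handled consistently.

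\textbf{Attainment.} The $m$-point Gauss--Laguerre rule has positive weights and distinct positive nodes, and it is exact for all polynomials of degree $\le 2m-1$ against $e^{-\rho}$. It therefore satisfies~\eqref{ring_quad} for $j=0,\dots,2m-1$, including unit mass ($j=0$) and unit power ($j=1$). All moments of order $\le 4m-1$ then match, so $t=4m-1$.

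\textbf{PAPR constraint.} The largest Laguerre node increases with $m$, a consequence of the interlacing of zeros of $L_{m}$ and $L_{m+1}$. Feasibility $\max_{k}\rho_{k}\le\Pi$ is therefore monotone in $m$. Since $t=4m-1$ increases with $m$, and any feasible $m$-ring design has $t\le 4m-1$, choosing the largest feasible $m$ with GL nodes attains the maximum order.
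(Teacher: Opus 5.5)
Your reduction and both halves of the argument are the paper's proof. You use triangularity of $H_{p,q}$ with respect to $x^{p}(x^{*})^{q}$, and the vanishing of the off-diagonal moments for $p+q\le 4m-1$ under $M_{\mathrm{ph}}\ge 4m$. For the upper bound you show that any $m$-node rule is inexact on $\prod_{k}(\rho-\rho_{k})^{2}$, giving $\lfloor t/2\rfloor\le 2m-1$. For attainment you use Gauss--Laguerre exactness up to degree $2m-1$. Your handling of order versus degree is correct. One small slip: the off-diagonal moments do \emph{not} all vanish at $\tau=4m$ when $M_{\mathrm{ph}}=4m$, since $\mathbb{E}_{\mu}[x^{4m}]=\sum_{k}\omega_{k}\rho_{k}^{2m}\neq 0$. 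This is harmless. The upper bound needs no assumption on $M_{\mathrm{ph}}$ at all, because a failing off-diagonal moment can only lower $t$, so $M_{\mathrm{ph}}\ge 4m$ matters only for attainment.

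The real problem is your last paragraph. From ``any feasible $m$-ring design has $t\le 4m-1$'' you conclude that the largest $m$ whose GL nodes satisfy $\max_{k}\rho_{k}\le\Pi$ attains the maximum order over all admissible designs. That does not follow: a PAPR-feasible design with more rings $m'$ and non-GL radii is bounded only by $4m'-1$. The conclusion is in fact false in general. Take $\Pi=3$. The outermost two-point GL node is $2+\sqrt{2}>3$, so your rule selects $m=1$ and $t=3$. Yet two rings with $(\rho_{1},\rho_{2})=(1/2,3)$, $(\omega_{1},\omega_{2})=(0.8,0.2)$ and $M_{\mathrm{ph}}\ge 8$ give $\sum_{k}\omega_{k}\rho_{k}=1$ and $\sum_{k}\omega_{k}\rho_{k}^{2}=2$ with peak power $3$, hence $t=5$.

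The paper's proof claims only what your interlacing argument actually establishes. The outermost Laguerre node grows with $m$, so $\Pi$ limits the number of rings for which the GL design is admissible. You should restrict the paragraph to that. A genuine optimality claim over all PAPR-feasible ring constellations would require solving the truncated moment problem for $e^{-\rho}$ on $[0,\Pi]$. Its extremal solutions can place a ring on the boundary $\rho=\Pi$, as in the example above, rather than at GL nodes. At the paper's $\Pi=6$~dB ($\Pi\approx 3.98<3+\sqrt{3}$) no ring design reaches moments through degree $4$, so the GL choice $m=2$ happens to be optimal there. That is a coincidence of the operating point, not a consequence of your argument.
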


\begin{IEEEproof}
Since $H_{p,q}$ equals $x^{p}(x^{*})^{q}$ plus terms of lower total degree, the Hermite moments in~\eqref{matching_order} vanish for $1\le p+q\le\tau$ if and only if $\mathbb{E}_{\mu}[x^{p}(x^{*})^{q}]=p!\,\mathbf{1}\{p=q\}$ holds for $1\le p+q\le\tau$.
Under $M_{\mathrm{ph}}\ge 4m$ and $p+q\le 4m-1$, the condition reduces to~\eqref{ring_quad}.
The $m$-node rule in~\eqref{ring_quad} is inexact for the nonnegative polynomial $f(\rho)=\prod_{k=1}^{m}(\rho-\rho_{k})^{2}$ of degree $2m$ since $\sum_{k=1}^{m}\omega_{k}f(\rho_{k})=0<\int_{0}^{\infty}e^{-\rho}f(\rho)\,\mathrm{d}\rho$, therefore $\lfloor t/2\rfloor\le 2m-1$ and $t\le 4m-1$.
Conversely, the $m$-point GL quadrature is exact for all polynomials of degree up to $2m-1$ with positive nodes and weights~\cite{gautschi2004orthogonal}, which satisfies~\eqref{ring_quad} for $j\le 2m-1$ together with the unit power constraint and attains the bound.
The outermost node $\max_{k}\rho_{k}$ grows with $m$, which bounds the admissible number of rings through~\eqref{ring_papr}.
\end{IEEEproof}
\vspace{-0.25em}

Proposition~\ref{prop_gl} reduces the constellation design to the choice of $(m,M_{\mathrm{ph}})$, and the resulting GL constellation $\mathcal{X}_{\mathrm{GL}}$ with distribution $\mu_{\mathrm{GL}}$ attains $t=4m-1$, which enters the APV design in the next subsection.
As depicted in Fig.~\ref{constellation}, the rings of $\mathcal{X}_{\mathrm{GL}}$ spread outward with $m$ along the quantile circles of $\mathcal{CN}(0,1)$ and raise the matching order from $t=3$ of uniform 64-QAM to $t=15$ at $m=4$.


\subsection{MA Placement and Power Control}\label{sec: apv_power}
\vspace{-0.3em}

The APV design aims to maximize the antenna term $\Delta/u_{1-\delta}$ in~\eqref{max_snr_rate}, which requires the array to enlarge the channel gain $\|\mathbf{H}_{b}(\mathbf{T})\mathbf{w}\|^{2}$ of the legitimate link and to suppress the transmit array gain $\Upsilon(\vartheta;\mathbf{T})$ over the entire angular range since the direction of Willie is unknown.
The angular average in~\eqref{avg_snr} can be characterized by the transmit correlation matrix
\vspace{-0.4em}
\begin{equation}
\begin{aligned}
\bar{\mathbf{R}}_{w}(\mathbf{T})&=\mathbb{E}_{\vartheta}\big[\mathbf{a}(\vartheta;\mathbf{T})\mathbf{a}^{H}(\vartheta;\mathbf{T})\big],\\
\big[\bar{\mathbf{R}}_{w}(\mathbf{T})\big]_{mm'}&=J_{0}\big(2\pi\|\mathbf{t}_{m}-\mathbf{t}_{m'}\|\big),
\end{aligned}
\label{Rw_def}
\end{equation}
which yields $\mathbb{E}_{\vartheta}[\Upsilon(\vartheta;\mathbf{T})]=\mathbf{w}^{H}\bar{\mathbf{R}}_{w}(\mathbf{T})\mathbf{w}$, where the entries of $\bar{\mathbf{R}}_{w}(\mathbf{T})$ depend on the inter-MA distances through the zeroth-order Bessel function $J_{0}(\cdot)$ under isotropic scattering around Willie.
Let $\mathbf{w}_{b}(\mathbf{T})=\arg\max_{\|\mathbf{w}\|=1}\|\mathbf{H}_{b}(\mathbf{T})\mathbf{w}\|^{2}$ denote the unit-norm dominant eigenvector of $\mathbf{H}_{b}^{H}(\mathbf{T})\mathbf{H}_{b}(\mathbf{T})$, and the following corollary specifies the APV that maximizes $\gamma_{b}^{\star}$.

\begin{corollary}[\textbf{Matching-Order-Aware MA Placement}]\label{cor_place}
Under $\mathbf{w}=\mathbf{w}_{b}(\mathbf{T})$, the APV that maximizes $\gamma_{b}^{\star}$ in~\eqref{max_snr_rate} is
\vspace{-0.4em}
\begin{equation}
\mathbf{T}^{\star}\in\arg\max_{\mathbf{T}\in\mathcal{T}}\
\frac{\lambda_{\max}\big(\mathbf{H}_{b}^{H}(\mathbf{T})\mathbf{H}_{b}(\mathbf{T})\big)}
{M_{w}\,\mathbf{w}_{b}^{H}\bar{\mathbf{R}}_{w}(\mathbf{T})\mathbf{w}_{b}\,u_{1-\delta}(\mathbf{T})}.
\label{place_exact}
\end{equation}
Moreover, when the covertness constraint is imposed on the average leakage as $\mathbb{E}_{U}[\mathcal{L}_{n}]\le 2\epsilon^{2}$, the admissible average warden SNR satisfies
\vspace{-0.4em}
\begin{equation}
\bar{P}^{\star}_{\mathrm{av}}\sim\frac{1}{\varsigma_{t}(\mathbf{T})}\Psi_{\mu}^{-1}\Big(\frac{2\epsilon^{2}}{n}\Big),\quad
\varsigma_{t}(\mathbf{T})=\big(\mathbb{E}[U^{t+1}]\big)^{\frac{1}{t+1}},
\label{fading_norm}
\end{equation}
as $n\to\infty$, where the fading norm $\varsigma_{t}(\mathbf{T})$ is the $(t+1)$-th order norm of $U$ and is non-decreasing with respect to $t$.
\end{corollary}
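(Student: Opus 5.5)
The plan is to take Theorem~\ref{theorem_rate} as the starting point and treat the two parts separately. For the first part, write $\gamma_{b}^{\star}=\Delta\bar{P}^{\star}$ and observe that the constellation factor $\Psi_{\mu}^{-1}(2\epsilon^{2}/n)$ does not depend on $\mathbf{T}$ or $\mathbf{w}$. Hence maximizing $\gamma_{b}^{\star}$ over $\mathbf{T}\in\mathcal{T}$ is the same as maximizing $\Delta/u_{1-\delta}(\mathbf{T})$. Substituting the definitions in~\eqref{scalar_ch} and~\eqref{avg_snr}, together with $\mathbb{E}_{\vartheta}[\Upsilon]=\mathbf{w}^{H}\bar{\mathbf{R}}_{w}(\mathbf{T})\mathbf{w}$ from~\eqref{Rw_def}, gives
\[
\Delta=\frac{\sigma_{w}^{2}\,\|\mathbf{H}_{b}(\mathbf{T})\mathbf{w}\|^{2}}{\sigma_{b}^{2}\,M_{w}\,\mathbf{w}^{H}\bar{\mathbf{R}}_{w}(\mathbf{T})\mathbf{w}}.
\]
In this expression $P$ cancels and the noise ratio is a constant. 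With $\mathbf{w}=\mathbf{w}_{b}(\mathbf{T})$, the numerator equals $\lambda_{\max}(\mathbf{H}_{b}^{H}\mathbf{H}_{b})$ by the Rayleigh quotient. Dividing by $u_{1-\delta}(\mathbf{T})$, whose dependence on $\mathbf{T}$ enters through the distribution of $U$ shaped by $\Upsilon$, then yields~\eqref{place_exact}.

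For the average-leakage part, I will use $\mathcal{L}_{n}=n\Psi_{\mu}(\bar{P}U)$ from Theorem~\ref{theorem_rate} together with Lemma~\ref{lemma_expansion}. The lemma gives $\Psi_{\mu}(\gamma)=\kappa_{t}\gamma^{t+1}(1+O(\gamma))$, using $\alpha=\gamma/(1+\gamma)=\gamma(1+O(\gamma))$. Taking expectations,
\[
\mathbb{E}_{U}[\mathcal{L}_{n}]=n\kappa_{t}\bar{P}^{t+1}\,\mathbb{E}[U^{t+1}]\,(1+o(1))
\]
as $\bar{P}\to0$. Setting this equal to $2\epsilon^{2}$ gives
\[
\bar{P}^{\star}_{\mathrm{av}}\sim\varsigma_{t}^{-1}\Big(\frac{2\epsilon^{2}}{\kappa_{t}n}\Big)^{1/(t+1)}.
\]
Inverting the same expansion gives $\Psi_{\mu}^{-1}(x)\sim(x/\kappa_{t})^{1/(t+1)}$, and combining the two produces~\eqref{fading_norm}. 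Monotonicity of $\varsigma_{t}$ in $t$ is Lyapunov's inequality $\|U\|_{p}\le\|U\|_{q}$ for $p\le q$, which follows from Jensen's inequality applied to the convex map $x\mapsto x^{q/p}$.

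The main obstacle is justifying the interchange of expectation and asymptotics, since $U$ is unbounded and the $O(\gamma)$ remainder need not be uniform for large $\bar{P}U$. The first thing I would try is to split on the event $\{U\le\bar{P}^{-1/2}\}$. On that event the remainder is uniformly $O(\bar{P}^{1/2})$. On its complement, I would use the crude bound $\Psi_{\mu}(\gamma)\le\log(1+\gamma)$, which holds because $I_{\mu}\ge0$. Combined with the finite-path structure of~\eqref{channel_bob}, the relevant moments of $U$ are finite; for Rayleigh-type path gains $U$ has exponential tails. This makes the tail contribution $o(\bar{P}^{t+1})$, and dominated convergence then closes the argument.
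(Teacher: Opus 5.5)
Your proposal is correct and follows the paper's proof essentially step for step. For the first part, you use the factorization $\gamma_b^\star=\Delta\bar{P}^\star$, in which the noise ratio and the constellation factor do not depend on $\mathbf{T}$. For the second part, you use $\mathcal{L}_n=n\Psi_\mu(\bar{P}U)$, the expansion $\mathbb{E}_U[\mathcal{L}_n]=n\kappa_t\bar{P}^{t+1}\mathbb{E}[U^{t+1}](1+o(1))$, the inversion $\Psi_\mu^{-1}(x)\sim(x/\kappa_t)^{1/(t+1)}$, and Lyapunov's inequality. The one difference is that you justify exchanging the expectation with the low-SNR expansion by an explicit truncation at $U\le\bar{P}^{-1/2}$, whereas the paper only cites the finiteness of all moments of $U$, so your argument spells out a step the paper asserts.
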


\begin{IEEEproof}
Under $\mathbf{w}=\mathbf{w}_{b}(\mathbf{T})$, the maximum beamforming gain equals $\lambda_{\max}(\mathbf{H}_{b}^{H}(\mathbf{T})\mathbf{H}_{b}(\mathbf{T}))$, and~\eqref{scalar_ch} with~\eqref{Rw_def} in~\eqref{avg_snr} yield
\vspace{-0.4em}
\begin{equation}
\frac{\Delta(\mathbf{T})}{u_{1-\delta}(\mathbf{T})}
=\frac{\sigma_{w}^{2}}{\sigma_{b}^{2}}\cdot
\frac{\lambda_{\max}\big(\mathbf{H}_{b}^{H}(\mathbf{T})\mathbf{H}_{b}(\mathbf{T})\big)}
{M_{w}\,\mathbf{w}_{b}^{H}\bar{\mathbf{R}}_{w}(\mathbf{T})\mathbf{w}_{b}\,u_{1-\delta}(\mathbf{T})}.
\label{app_ratio}
\end{equation}
Since the noise power ratio $\frac{\sigma_{w}^{2}}{\sigma_{b}^{2}}$ is independent of $\mathbf{T}$, maximizing $\gamma_{b}^{\star}$ in~\eqref{max_snr_rate} over $\mathcal{T}$ is equivalent to solving~\eqref{place_exact}.
For the second part, Theorem~\ref{theorem_rate} gives $\mathcal{L}_{n}=n\Psi_{\mu}(\bar{P}U)$ for every realization of $U$, and~\eqref{low_snr} with $\alpha=\gamma(1+O(\gamma))$ yields $\mathbb{E}_{U}[\mathcal{L}_{n}]=n\kappa_{t}\bar{P}^{t+1}\mathbb{E}[U^{t+1}](1+O(\bar{P}))$ since $U$ has finite moments of all orders.
Setting the average leakage equal to $2\epsilon^{2}$ and solving for $\bar{P}$ yields
\vspace{-0.4em}
\begin{equation}
\begin{aligned}
\bar{P}^{\star}_{\mathrm{av}}
&=\frac{1}{\varsigma_{t}(\mathbf{T})}\Big(\frac{2\epsilon^{2}}{\kappa_{t}\,n}\Big)^{\frac{1}{t+1}}\Big(1+O\big(n^{-\frac{1}{t+1}}\big)\Big),
\end{aligned}
\label{app_pav}
\end{equation}
which reduces to~\eqref{fading_norm} through the inverse expansion~\eqref{app_inverse} in Appendix~\ref{app:theorem_rate}.
The monotonicity of $\varsigma_{t}(\mathbf{T})$ with respect to $t$ follows from the Lyapunov inequality $(\mathbb{E}[U^{r}])^{1/r}\le(\mathbb{E}[U^{r'}])^{1/r'}$. 
\end{IEEEproof}
\vspace{-0.25em}

Equation~\eqref{fading_norm} shows that the admissible warden SNR is divided by the $(t+1)$-th order norm of $U$, which grows with the matching order and thereby couples the APV design with the constellation design.
Since the quantile $u_{1-\delta}(\mathbf{T})$ in~\eqref{place_exact} is not differentiable with respect to $\mathbf{T}$, the placement adopts the differentiable fading norm $\varsigma_{t}(\mathbf{T})$ in~\eqref{fading_norm} that reflects the upper tail of $U$ as $u_{1-\delta}(\mathbf{T})$ does, which gives
\vspace{-0.4em}
\begin{equation}
\max_{\mathbf{T}\in\mathcal{T}}\ \log\lambda_{\max}\big(\mathbf{H}_{b}^{H}\mathbf{H}_{b}\big)
-\log\big(\mathbf{w}_{b}^{H}\bar{\mathbf{R}}_{w}\mathbf{w}_{b}\big)-\log\varsigma_{t}(\mathbf{T}).
\label{place_smooth}
\end{equation}
The derivative of the largest eigenvalue equals the derivative of the quadratic form $\mathbf{w}_{b}^{H}\mathbf{H}_{b}^{H}\mathbf{H}_{b}\mathbf{w}_{b}$ with $\mathbf{w}_{b}$ held fixed by the first-order eigenvalue perturbation, and since only the $m$-th column of $\mathbf{H}_{b}$ depends on $\mathbf{t}_{m}$, we have
\vspace{-0.4em}
\begin{equation}
\begin{aligned}
\frac{\partial\log\lambda_{\max}(\mathbf{H}_{b}^{H}\mathbf{H}_{b})}{\partial\mathbf{t}_{m}}
&=\frac{1}{\lambda_{\max}}\frac{\partial\big(\mathbf{w}_{b}^{H}\mathbf{H}_{b}^{H}\mathbf{H}_{b}\mathbf{w}_{b}\big)}{\partial\mathbf{t}_{m}}\\
&=\frac{2}{\lambda_{\max}}\mathrm{Re}\Big\{w_{b,m}\,(\mathbf{H}_{b}\mathbf{w}_{b})^{H}\frac{\partial[\mathbf{H}_{b}]_{:,m}}{\partial\mathbf{t}_{m}}\Big\},
\end{aligned}
\label{grad_lambda}
\end{equation}
where $w_{b,m}$ denotes the $m$-th entry of $\mathbf{w}_{b}$ and the column derivative follows from~\eqref{channel_bob} and~\eqref{frv} as
\vspace{-0.4em}
\begin{equation}
\frac{\partial[\mathbf{H}_{b}]_{:,m}}{\partial\mathbf{t}_{m}}
=-\mathrm{j}2\pi\sum_{\ell=1}^{L_{p}}\sigma_{b,\ell}\,\mathbf{c}_{b}(\varpi_{b,\ell})\, e^{-\mathrm{j}2\pi\mathbf{t}_{m}^{\top}\mathbf{k}(\vartheta_{b,\ell})}\,\mathbf{k}^{\top}(\vartheta_{b,\ell}).
\label{grad_column}
\end{equation}
The second and third terms in~\eqref{place_smooth} are calculated over a fixed grid of departure angles with derivatives from~\eqref{frv} by the chain rule, and the constraint $\mathbf{T}\in\mathcal{T}$ is enforced by projection onto $\mathcal{T}$ after each gradient step with a backtracking step size, which yields a nondecreasing objective sequence bounded on the compact set $\mathcal{T}$ and therefore converges to a stationary point of~\eqref{place_smooth}.

Fig.~\ref{array_sessions} illustrates the optimized MA positions for two consecutive sessions, where each element is relocated within $\mathcal{A}$ as the legitimate channel changes.
The transmit power is finally set by the exact quantile $u_{1-\delta}(\mathbf{T}^{\star})$ through~\eqref{max_power}, which keeps the covertness guarantee of Theorem~\ref{theorem_rate} intact after the replacement of $u_{1-\delta}$ by $\varsigma_{t}$ in~\eqref{place_smooth}.

With $\mathbf{w}^{\star}=\mathbf{w}_{b}(\mathbf{T}^{\star})$, the relations in~\eqref{avg_snr} and~\eqref{Rw_def} yield the transmit power and the legitimate SNR
\vspace{-0.4em}
\begin{equation}
\begin{aligned}
P^{\star}&=\min\Big\{P_{\max},\
\frac{\sigma_{w}^{2}\,\bar{P}^{\star}(\mu_{\mathrm{GL}},\mathbf{T}^{\star})}
{M_{w}\,\mathbf{w}^{\star H}\bar{\mathbf{R}}_{w}(\mathbf{T}^{\star})\mathbf{w}^{\star}}\Big\},\\
\gamma_{b}^{\star}&=\frac{P^{\star}\lambda_{\max}\big(\mathbf{H}_{b}^{H}(\mathbf{T}^{\star})\mathbf{H}_{b}(\mathbf{T}^{\star})\big)}{\sigma_{b}^{2}},
\end{aligned}
\label{P_star}
\end{equation}
which reduces to $\gamma_{b}^{\star}=\Delta\bar{P}^{\star}$ in~\eqref{max_snr_rate} for $P^{\star}<P_{\max}$.
Consequently, the covertness constraint in $(\mathcal{P}1)$ has been converted into the fixed physical configuration $\Omega^{\star}=(\mathcal{X}_{\mathrm{GL}},\mu_{\mathrm{GL}},t,\mathbf{T}^{\star},\mathbf{w}^{\star},P^{\star},\gamma_{b}^{\star})$, over which the semantic transceiver is designed in the next section.


\section{Digital Covert SemCom Transceiver Design}\label{sec: transceiver}

\begin{figure*}[htb]
    \centering
    \includegraphics[width=\textwidth]{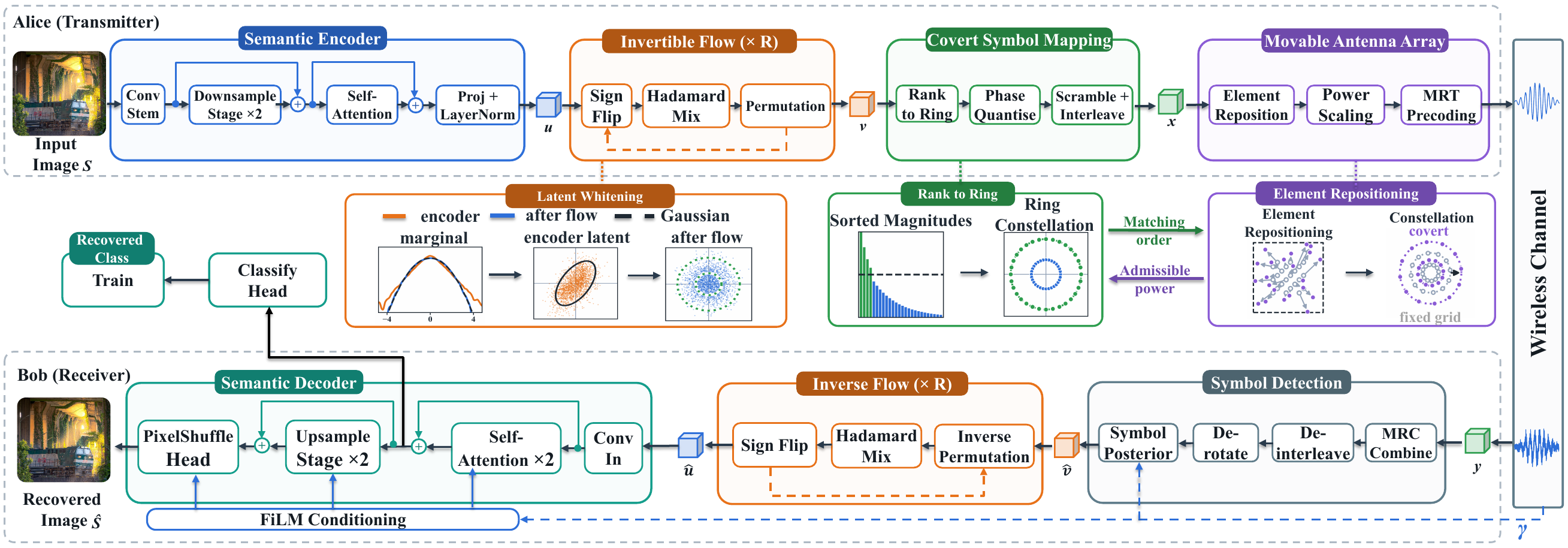}
    \vspace{-0.5em}
    \caption{Network architecture of the proposed semantic transceiver with structured digital mapping.
}
    \label{network}
    \vspace{-1.5em}
\end{figure*}


\subsection{Semantic Distortion Bound}\label{sec: distortion_bound}
\vspace{-0.3em}

To characterize the semantic performance attainable under $\Omega^{\star}$, the rate-distortion function of the weighted semantic distortion $d_{\beta}$ in Section~\ref{sec: system_model} is defined as
\vspace{-0.4em}
\begin{equation}
\mathcal{D}_{\beta}(R)=\inf_{I(S;\widehat{S},\widehat{\Theta})\le R}\mathbb{E}\big[d_{\beta}\big],
\label{rd_function}
\end{equation}
which is non-increasing and convex in the mutual information budget $R$.
The following theorem establishes a lower bound on the session-averaged distortion under the covertness constraint.

\begin{theorem}[\textbf{Semantic Distortion Bound Under Covertness}]\label{theorem_distortion}
Consider a session of $J=n/L$ independent semantic objects with arbitrary joint coding across the objects under a constellation $\mathcal{X}$, subject to the covertness constraint~\eqref{covert_constraint} with probability no less than $1-\delta$.
Let $\gamma_{b}^{\star}(\mu)$ denote the legitimate SNR in~\eqref{max_snr_rate} under $\mu$, and define the covert semantic rate
\vspace{-0.4em}
\begin{equation}
\eta_{n}^{\star}=\sup_{\mathcal{X},\mu}\ I_{\mu}\big(\gamma_{b}^{\star}(\mu)\big),
\label{covert_sem_rate}
\end{equation}
where the supremum is taken over all constellations with $|\mathcal{X}|=K$, $\max_{a\in\mathcal{X}}|a|^{2}\le\Pi$, and the distributions on them.
The session-averaged distortion $\bar{D}_{J}=\frac{1}{J}\sum_{j=1}^{J}\mathbb{E}[d_{\beta,j}]$ satisfies
\vspace{-0.4em}
\begin{equation}
\bar{D}_{J}\ge\mathcal{D}_{\beta}\big(L\,\eta_{n}^{\star}\big),
\label{distortion_bound}
\end{equation}
where $d_{\beta,j}$ denotes the weighted distortion of the $j$-th object.
\end{theorem}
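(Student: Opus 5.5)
The bound is a converse obtained by chaining the source-side rate-distortion function with the channel-side mutual information. The channel side is controlled by Theorem~\ref{theorem_rate}.

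Let $\mathbf{S}^{J}=(S_{1},\dots,S_{J})$ denote the $J$ independent objects, $\mathbf{x}\in\mathcal{X}^{n}$ the transmitted block, $\mathbf{y}_{b}$ Bob's $n$ scalar observations after MRC, and $(\widehat{S}_{j},\widehat{\Theta}_{j})_{j\le J}$ the estimates. These form the Markov chain $\mathbf{S}^{J}\to\mathbf{x}\to\mathbf{y}_{b}\to(\widehat{S}_{j},\widehat{\Theta}_{j})_{j}$.

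\emph{Source side.} By the definition~\eqref{rd_function}, $\mathbb{E}[d_{\beta,j}]\ge\mathcal{D}_{\beta}(I(S_{j};\widehat{S}_{j},\widehat{\Theta}_{j}))$. Averaging over $j$ and using convexity and monotonicity of $\mathcal{D}_{\beta}$ (Jensen) gives $\bar{D}_{J}\ge\mathcal{D}_{\beta}\big(\frac{1}{J}\sum_{j}I(S_{j};\widehat{S}_{j},\widehat{\Theta}_{j})\big)$. Independence of the objects and the chain rule then give
\begin{equation*}
\sum_{j}I(S_{j};\widehat{S}_{j},\widehat{\Theta}_{j})\le I(\mathbf{S}^{J};(\widehat{S}_{j},\widehat{\Theta}_{j})_{j})\le I(\mathbf{x};\mathbf{y}_{b}),
\end{equation*}
where the last step is data processing. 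Since $J=n/L$, it therefore suffices to show
\begin{equation*}
\tfrac{1}{n}I(\mathbf{x};\mathbf{y}_{b})\le\eta_{n}^{\star}.
\end{equation*}

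\emph{Channel side.} The channel is memoryless with fixed $\gamma_{b}$ over the session, so $I(\mathbf{x};\mathbf{y}_{b})\le\sum_{k}I(x_{k};y_{b,k})=\sum_{k}I_{\mu_{k}}(\gamma_{b})$, where $\mu_{k}$ is the marginal of $x_{k}$. Define the average marginal $\bar{\mu}=\frac{1}{n}\sum_{k}\mu_{k}$. Mutual information is concave in the input distribution, so $\frac{1}{n}\sum_{k}I_{\mu_{k}}(\gamma_{b})\le I_{\bar{\mu}}(\gamma_{b})$. It remains to show that $\gamma_{b}\le\gamma_{b}^{\star}(\bar{\mu})$ whenever the covertness constraint holds with probability at least $1-\delta$. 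Then monotonicity of $I_{\bar{\mu}}$ in $\gamma$ gives $I_{\bar{\mu}}(\gamma_{b})\le I_{\bar{\mu}}(\gamma_{b}^{\star}(\bar{\mu}))\le\eta_{n}^{\star}$. Here $\bar{\mu}$ is supported on $\mathcal{X}$ and obeys the same PAPR bound, so it is admissible in~\eqref{covert_sem_rate}.

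\emph{Main obstacle: the leakage bound for non-i.i.d.\ inputs.} Theorem~\ref{theorem_rate} assumes i.i.d.\ symbols, whereas the converse must allow arbitrary joint coding. I would proceed in three steps.
\begin{itemize}
\item For fixed $s$, note that $\mathcal{CN}(\mathbf{0},s\mathbf{I}_{n})$ is a product measure. The standard inequality $D(\mathbb{P}^{(n)}_{1}\|Q^{\otimes n})\ge\sum_{k}D(\mathbb{P}_{z_{k}}\|Q)$ then holds, since the excess is the multi-information of the outputs and is nonnegative.
\item Convexity of KL divergence in its first argument gives $\sum_{k}D(\mathbb{P}_{z_{k}}\|Q)\ge nD(\mathbb{P}_{\bar{z}}\|Q)$, where $\mathbb{P}_{\bar{z}}$ is the output distribution under input $\bar{\mu}$, because the output is linear in the input distribution.
\item Taking the infimum over $s$ yields $\mathcal{L}_{n}\ge n\Psi_{\bar{\mu}}(\bar{P}U)$ by Lemma~\ref{lemma_ng}.
\end{itemize}
Hence the event $\{\mathcal{L}_{n}\le2\epsilon^{2}\}$ is contained in $\{\Psi_{\bar{\mu}}(\bar{P}U)\le2\epsilon^{2}/n\}$. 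By the strict monotonicity in Lemma~\ref{lemma_ng}, this event equals $\{U\le\Psi_{\bar{\mu}}^{-1}(2\epsilon^{2}/n)/\bar{P}\}$. Requiring its probability to be at least $1-\delta$ forces $\bar{P}\le\bar{P}^{\star}(\bar{\mu})$, by the same quantile argument as in Theorem~\ref{theorem_rate}. Multiplying by $\Delta$ gives $\gamma_{b}\le\gamma_{b}^{\star}(\bar{\mu})$, which closes the chain and proves~\eqref{distortion_bound}.

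The delicate points are all in this step: the order of the infimum over $s$ relative to the product decomposition (the decomposition holds for every $s$, so taking the infimum afterwards is valid), and the treatment of Gaussian $\bar{\mu}$, which cannot arise on a finite $\mathcal{X}$.
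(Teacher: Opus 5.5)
Your proof is correct and follows the paper's own argument: drop the total correlation, average the marginals into $\bar{\mu}$, use the quantile argument to get $\gamma_{b}\le\gamma_{b}^{\star}(\bar{\mu})$, single-letterize with concavity of $I_{\mu}$, and finish with convexity and monotonicity of $\mathcal{D}_{\beta}$. The one difference there, applying convexity of $D(\cdot\,\|\,\mathcal{CN}(0,s))$ at fixed $s$ before taking the infimum instead of the paper's convexity of $\mu\mapsto\Psi_{\mu}$, is harmless and slightly cleaner. One small fix is needed: Bob's decoder also uses the shared scrambling state $\varXi$, so the chain $\mathbf{S}^{J}\to\mathbf{x}\to\mathbf{y}_{b}\to(\widehat{S}_{j},\widehat{\Theta}_{j})_{j}$ does not hold as written, and you should instead bound $\sum_{j}I(S_{j};\widehat{S}_{j},\widehat{\Theta}_{j})\le I(\mathbf{x};\mathbf{y}_{b}\mid\varXi)\le I(\mathbf{x};\mathbf{y}_{b})$ using the independence of $\varXi$ from the sources and the Markov chain $\varXi\to\mathbf{x}\to\mathbf{y}_{b}$, as the paper does.
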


\begin{IEEEproof}
See Appendix~\ref{app:theorem_distortion}.
\end{IEEEproof}
\vspace{-0.25em}

Theorem~\ref{theorem_distortion} reduces the semantic component of $(\mathcal{P}1)$ to the scalar $L\eta_{n}^{\star}$ governed by $\Omega^{\star}$, and the equality conditions associated with the leakage bound in the proof require the $L$ transmitted symbols per block to be i.i.d. with distribution $\mu$ for any source realization $S=s$ as
\vspace{-0.4em}
\begin{equation}
\mathbb{P}_{x_{1},\dots,x_{L}\mid S=s}=\prod_{k=1}^{L}\mu(x_{k}),\quad\forall s.
\label{target_law}
\end{equation}
Since~\eqref{target_law} eliminates the dependence on $s$ from the block distribution at Willie, semantic information is instead conveyed by transmitted symbols conditioned on the scrambling sequence available to Bob.

Accordingly, the transmitter includes two functional layers, i.e., a structure layer that enforces~\eqref{target_law} by source-independent randomization, and a learning layer that comprises the invertible flow and the semantic encoder and decoder to minimize the weighted distortion over the fixed configuration.


\subsection{Covert Symbol Mapping}\label{sec: structure_layer}

To enforce~\eqref{target_law}, the structure layer is implemented as the covert symbol mapping $\mathcal{M}$ in Fig.~\ref{network}, which maps $\mathbf{v}$ to the transmitted block $\mathbf{x}$ through rank-to-ring assignment, phase quantization, and scrambling with interleaving.

First, the rank-to-ring module assigns the $L$ complex coordinates of $\mathbf{v}$ to the $m$ rings of $\mathcal{X}_{\mathrm{GL}}$ under the occupancy vector $\mathbf{N}=(N_{1},\dots,N_{m})$ that specifies the number of coordinates carried by each ring.
The occupancy is generated from the pseudo-random sequence according to the conditional binomial rule
\vspace{-0.4em}
\begin{equation}
N_{i}\,\big|\,N_{1},\dots,N_{i-1}\sim\mathrm{Bin}\Big(L-\sum_{j<i}N_{j},\ \frac{\omega_{i}}{1-\sum_{j<i}\omega_{j}}\Big),
\label{occupancy}
\end{equation}
for $i=1,\dots,m-1$ with $N_{m}=L-\sum_{j<m}N_{j}$, where $\mathrm{Bin}(\cdot,\cdot)$ denotes the binomial distribution.
Multiplying the conditional probabilities in~\eqref{occupancy} cancels the successive factorials and yields the multinomial distribution
\vspace{-0.4em}
\begin{equation}
\Pr\{\mathbf{N}=\mathbf{n}\}=\binom{L}{\mathbf{n}}\prod_{i=1}^{m}\omega_{i}^{n_{i}},\quad
\binom{L}{\mathbf{n}}=\frac{L!}{\prod_{i=1}^{m}n_{i}!}.
\label{multinomial}
\end{equation}
Let $r_{i}\in\{1,\dots,L\}$ denote the rank of $|v_{i}|$ among the $L$ block magnitudes sorted in the ascending order as illustrated in Fig.~\ref{network}.
The ring index of the $i$-th coordinate is given by
\vspace{-0.4em}
\begin{equation}
k_{i}=\min\Big\{k:\ r_{i}\le\sum_{j\le k}N_{j}\Big\},
\label{rank_to_ring}
\end{equation}
which allocates the $N_{1}$ coordinates with the smallest magnitude to the ring of the smallest radius and depends on $\mathbf{v}$ through the ordering of the magnitudes.
In parallel, the phase quantization module maps the phase of each coordinate to one of the $M_{\mathrm{ph}}$ equispaced phases according to $\nu_{i}=\lfloor\frac{M_{\mathrm{ph}}}{2\pi}\arg v_{i}\rfloor\bmod M_{\mathrm{ph}}$.

The scrambling and interleaving module then shifts each phase index by an independent offset $\Theta_{i}\sim\mathrm{Unif}\{0,\dots,M_{\mathrm{ph}}-1\}$ and permutes the $L$ symbols by $\Pi\sim\mathrm{Unif}(\mathfrak{S}_{L})$ with $\mathfrak{S}_{L}$ the symmetric group of degree $L$, which removes the dependence on the source and yields
\vspace{-0.4em}
\begin{equation}
\mathbf{x}=\Pi\Big(\Big\{\sqrt{\rho_{k_{i}}}\,e^{\mathrm{j}2\pi(\nu_{i}+\Theta_{i})/M_{\mathrm{ph}}}\Big\}_{i=1}^{L}\Big).
\label{tx_block}
\end{equation}
Conditioned on $\mathbf{N}=\mathbf{n}$, the multiset of the $L$ magnitudes is determined by $\mathbf{n}$, and the uniform permutation with the independent uniform offsets equalizes positions and phases as
\vspace{-0.4em}
\begin{equation}
\mathbb{P}_{\mathbf{x}\mid\mathbf{N}=\mathbf{n},S=s}(\mathbf{a})
=\frac{\mathbf{1}\{\mathbf{a}\in\mathcal{B}(\mathbf{n})\}}{\binom{L}{\mathbf{n}}M_{\mathrm{ph}}^{L}},\quad\forall s,
\label{cond_block}
\end{equation}
where $\mathcal{B}(\mathbf{n})$ collects the blocks whose ring occupancy equals $\mathbf{n}$.
Averaging~\eqref{cond_block} over the law of $\mathbf{N}$ in~\eqref{multinomial} and counting the $\binom{L}{\mathbf{n}}M_{\mathrm{ph}}^{L}$ blocks of $\mathcal{B}(\mathbf{n})$ lead to
\vspace{-0.4em}
\begin{equation}
\mathbb{P}_{\mathbf{x}\mid S=s}(\mathbf{a})
=\frac{1}{M_{\mathrm{ph}}^{L}}\prod_{i=1}^{m}\omega_{i}^{n_{i}(\mathbf{a})}
=\prod_{k=1}^{L}\mu_{\mathrm{GL}}(a_{k}),
\label{product_law}
\end{equation}
where $n_{i}(\mathbf{a})$ denotes the number of symbols of $\mathbf{a}$ on the $i$-th ring and the second equality follows from $\mu_{\mathrm{GL}}(a)=\omega_{k(a)}/M_{\mathrm{ph}}$ in~\eqref{ring_const}.
Equation~\eqref{product_law} establishes~\eqref{target_law} with $\mu=\mu_{\mathrm{GL}}$, under which the block distribution at Willie is fixed by construction and the covert rate of Theorem~\ref{theorem_rate} is attained.


\subsection{Digital Semantic Encoder and Invertible Flow}\label{sec: learning_layer}
\vspace{-0.3em}

To maximize the semantic information carried by the magnitude ordering and the quantized phases in Section~\ref{sec: structure_layer}, the digital semantic encoder is followed by an invertible flow, as shown in Fig.~\ref{network}.
The input image $S\in\mathbb{R}^{H\times W\times 3}$ first passes through the convolutional stem and the downsample stages, which reduce the spatial resolution by a factor of $2^{n_{s}}$.
The projection maps the resulting feature map to $C=2L\cdot 4^{n_{s}}/(HW)$ channels and flattens it into $\mathbf{f}\in\mathbb{R}^{2L}$.
Subsequently, the layer normalization standardizes $\mathbf{f}$ per image as
\vspace{-0.4em}
\begin{equation}
u_{i}=\frac{f_{i}-\bar{f}}{\sigma_{f}},\quad
\bar{f}=\frac{1}{2L}\sum_{i=1}^{2L}f_{i},\quad
\sigma_{f}^{2}=\frac{1}{2L}\sum_{i=1}^{2L}\big(f_{i}-\bar{f}\big)^{2},
\label{layer_norm}
\end{equation}
which yields the latent vector $\mathbf{u}=E_{\varphi}(S)$ with zero mean and unit variance across $2L$ entries.

To equalize the energy of the latent coordinates prior to rank-based mapping, the invertible flow is constructed via $R$ cascaded stages of sign flipping, Hadamard mixing, and permutation, given by
\vspace{-0.4em}
\begin{equation}
\mathbf{v}=F_{\theta}(\mathbf{u})=\mathbf{W}_{R-1}\cdots\mathbf{W}_{0}\,\mathbf{u},\quad
\mathbf{W}_{r}=\mathbf{P}_{r}\mathbf{H}_{2L}\mathbf{D}_{r},
\label{flow_def}
\end{equation}
where $\mathbf{D}_{r}$ is a diagonal matrix with pseudo-random entries in $\{-1,+1\}$, $\mathbf{H}_{2L}$ is the normalized Walsh--Hadamard matrix of order $2L$ with entries $\pm(2L)^{-1/2}$, and $\mathbf{P}_{r}$ is a pseudo-random permutation matrix.
The parameter vector $\theta$ subsumes the sign flips and the permutations that are fixed without learning, and the orthogonality of each factor in~\eqref{flow_def} leads to
\vspace{-0.4em}
\begin{equation}
\|\mathbf{v}\|=\|\mathbf{u}\|,\quad
F_{\theta}^{-1}(\mathbf{v})=\mathbf{W}_{0}^{\top}\cdots\mathbf{W}_{R-1}^{\top}\,\mathbf{v}.
\label{flow_inverse}
\end{equation}
For the input $\mathbf{z}$ of the $r$-th stage with independent signs in $\mathbf{D}_{r}$, the cross terms vanish in the second moment of each mixed coordinate and $\mathbb{E}_{\mathbf{D}_{r}}[(\mathbf{H}_{2L}\mathbf{D}_{r}\mathbf{z})_{i}^{2}]=\|\mathbf{z}\|^{2}/(2L)$ holds for $i=1,\dots,2L$. 
Besides, the Hoeffding inequality with a union bound limits the deviation from the mean energy as
\vspace{-0.4em}
\begin{equation}
\Pr\Big\{\|\mathbf{H}_{2L}\mathbf{D}_{r}\mathbf{z}\|_{\infty}\le\|\mathbf{z}\|\sqrt{\frac{\log(4L/p_{0})}{L}}\Big\}\ge 1-p_{0}
\label{spreading}
\end{equation}
for $p_{0}\in(0,1)$~\cite{ailon2009fast}, where $\|\cdot\|_{\infty}$ denotes the largest magnitude among the entries.
Thus, the energy of $\mathbf{u}$ is distributed uniformly over the coordinates of $\mathbf{v}$, and the elliptical scatter of the encoder latent becomes isotropic after the flow, as illustrated in the latent whitening panel of Fig.~\ref{network}.

Finally, $\mathbf{x}$ is transmitted through the MA array under the configuration $\Omega^{\star}$ with the maximum-ratio transmission~(MRT) precoding $\mathbf{w}^{\star}=\mathbf{w}_{b}(\mathbf{T}^{\star})$ and the power $P^{\star}$ in~\eqref{tx_rx}, where the arrows in the element repositioning panel of Fig.~\ref{network} indicate that the matching order $t$ affects the placement objective~\eqref{place_smooth} through $\varsigma_{t}(\mathbf{T})$ and that $\mathbf{T}^{\star}$ in turn determines the admissible power through $u_{1-\delta}(\mathbf{T}^{\star})$ in~\eqref{max_power}.


\subsection{Receiver Design and End-to-End Training}\label{sec: training}
\vspace{-0.3em}

At Bob, the received samples are converted back into the latent vector in the reverse order of the transmitter in Fig.~\ref{network}.
After the MRC in~\eqref{scalar_ch}, the de-interleave and de-rotate modules apply the inverse permutation and the phase offsets of the sequence, which recovers the $i$-th symbol as
\vspace{-0.4em}
\begin{equation}
\tilde{y}_{i}
=e^{-\mathrm{j}2\pi\Theta_{i}/M_{\mathrm{ph}}}\big[\Pi^{-1}(\mathbf{y}_{b})\big]_{i}
=\sqrt{\gamma_{b}\rho_{k_{i}}}\,e^{\mathrm{j}2\pi\nu_{i}/M_{\mathrm{ph}}}+\tilde{n}_{i},
\label{derotate_chain}
\end{equation}
where $\mathbf{y}_{b}=[y_{b,1},\dots,y_{b,L}]^{\top}$ stacks the outputs of~\eqref{scalar_ch} over one block and $\tilde{n}_{i}\sim\mathcal{CN}(0,1)$ is the de-rotated noise.
Since each symbol follows $\mu_{\mathrm{GL}}$ by~\eqref{product_law}, the symbol posterior module calculates the posterior probability and the posterior mean as
\vspace{-0.4em}
\begin{equation}
\begin{aligned}
p_{i}(a)=\frac{\mu_{\mathrm{GL}}(a)\,e^{-|\tilde{y}_{i}-\sqrt{\gamma_{b}}\,a|^{2}}}
{\sum_{a'\in\mathcal{X}_{\mathrm{GL}}}\mu_{\mathrm{GL}}(a')\,e^{-|\tilde{y}_{i}-\sqrt{\gamma_{b}}\,a'|^{2}}},
\hat{v}_{i}=\sum_{a\in\mathcal{X}_{\mathrm{GL}}}a\,p_{i}(a).
\end{aligned}
\label{symbol_posterior}
\end{equation}
The estimates $\hat{v}_{1},\dots,\hat{v}_{L}$ are converted into the real-valued vector $\hat{\mathbf{v}}\in\mathbb{R}^{2L}$ by stacking their real and imaginary parts, and the inverse flow applies~\eqref{flow_inverse} as $\hat{\mathbf{u}}=F_{\theta}^{-1}(\hat{\mathbf{v}})$.

The semantic decoder $G_{\psi}$ adopts the reverse structure of the encoder in Fig.~\ref{network}, where the classification head outputs $\widehat{\Theta}$ after the self-attention blocks and the upsampling stages with the PixelShuffle head outputting $\widehat{S}\in\mathbb{R}^{H\times W\times 3}$.
Since $\gamma_{b}^{\star}$ in~\eqref{P_star} varies with $\epsilon$, $n$ and $\mathbf{T}^{\star}$, the feature-wise linear modulation~(FiLM) conditioning adapts the decoder to the operating SNR through the feature modulation
\vspace{-0.4em}
\begin{equation}
\begin{aligned}
&\tilde{\mathbf{h}}_{\ell}=\big(\mathbf{1}+\mathbf{m}_{\ell}(\gamma_{b})\big)\odot\mathbf{h}_{\ell}+\mathbf{q}_{\ell}(\gamma_{b}),\\
&\big[\mathbf{m}_{\ell}(\gamma_{b}),\mathbf{q}_{\ell}(\gamma_{b})\big]=\mathrm{MLP}_{\psi}(\gamma_{b}),
\end{aligned}
\label{film_mod}
\end{equation}
where $\mathbf{h}_{\ell}$ denotes the feature map of the $\ell$-th decoder stage, $\odot$ denotes the element-wise product, and $\mathbf{m}_{\ell}$ and $\mathbf{q}_{\ell}$ are generated by a two-layer perceptron from $\gamma_{b}$ in dB.
During training, $\gamma_{b}$ is drawn from an interval around $\gamma_{b}^{\star}$ to cover the operating points of different $(\epsilon,n,\mathbf{T}^{\star})$.

With $\Omega^{\star}$ fixed, the covertness constraint of $(\mathcal{P}1)$ is satisfied and the remaining objective is to minimize the weighted distortion
\vspace{-0.4em}
\begin{equation}
\begin{aligned}
\mathcal{L}(\varphi,\psi)
&=\mathbb{E}\big[\varrho_{\mathrm{o}}(S,\widehat{S})\big]+\beta\,\mathbb{E}\big[\bar{\varrho}_{\mathrm{s}}(S,\widehat{\Theta})\big]\\
&=\mathbb{E}\Big[\frac{\|S-\widehat{S}\|^{2}}{3HW}\Big]-\beta\,\mathbb{E}\big[\log\hat{p}_{\psi}(\Theta\mid\hat{\mathbf{u}},\gamma_{b})\big],
\end{aligned}
\label{loss_fn}
\end{equation}
where $\hat{p}_{\psi}(\theta\mid\hat{\mathbf{u}},\gamma_{b})$ denotes the probability of class $\theta$ produced by the classification head and $\|S-\widehat{S}\|^{2}$ denotes the sum of the squared pixel errors.
The covert symbol mapping and the invertible flow contain no trainable parameters, and the gradient of~\eqref{loss_fn} reaches $\varphi$ through the covert symbol mapping by the straight-through estimator, which forwards the constellation point in~\eqref{tx_block} and backpropagates through the soft assignment of $v_{i}$ to $\mathcal{X}_{\mathrm{GL}}$.
The complete training procedure is summarized in Algorithm~\ref{alg:train}.

\begin{algorithm}[!t]
\caption{Physical-Layer Design and End-to-End Training}
\label{alg:train}
\begin{algorithmic}[1]
\Require training set $\{(S,\Theta)\}$, covertness budget $\epsilon$, session length $n$, outage tolerance $\delta$, PAPR constraint $\Pi$, power budget $P_{\max}$, weight $\beta$, learning rate $\zeta$
\Ensure encoder $E_{\varphi}$, decoder $G_{\psi}$, configuration $\Omega^{\star}$
\State Select the largest $m$ with $\max_{k}\rho_{k}\le\Pi$ and form $(\mathcal{X}_{\mathrm{GL}},\mu_{\mathrm{GL}},t)$ by Proposition~\ref{prop_gl}
\State Obtain $\mathbf{T}^{\star}$ by~\eqref{place_smooth} with~\eqref{grad_lambda}--\eqref{grad_column}, and set $\mathbf{w}^{\star}=\mathbf{w}_{b}(\mathbf{T}^{\star})$, $P^{\star}$ and $\gamma_{b}^{\star}$ by~\eqref{P_star}
\State Initialize $\varphi$ and $\psi$ and generate $\theta$ from the synchronized seed
\While{not converged}
    \State compute $\mathbf{u}$ by~\eqref{layer_norm} and $\mathbf{v}$ by~\eqref{flow_def}, form $\mathbf{x}$ by~\eqref{occupancy}--\eqref{tx_block}
    \State draw $\gamma_{b}$ around $\gamma_{b}^{\star}$, obtain $y_{b,k}$ by~\eqref{scalar_ch} and recover $\hat{v}_{i}$ by~\eqref{derotate_chain}--\eqref{symbol_posterior}
    \State compute $\hat{\mathbf{u}}=F_{\theta}^{-1}(\hat{\mathbf{v}})$ by~\eqref{flow_inverse} and decode $(\widehat{S},\widehat{\Theta})=G_{\psi}(\hat{\mathbf{u}},\gamma_{b})$ with~\eqref{film_mod}
    \State update $(\varphi,\psi)\leftarrow(\varphi,\psi)-\zeta\nabla\mathcal{L}(\varphi,\psi)$ with~\eqref{loss_fn}
\EndWhile
\State \Return $E_{\varphi}$, $G_{\psi}$, $\Omega^{\star}$
\end{algorithmic}
\vspace{-0.2em}
\end{algorithm}


\section{Numerical and Simulation Results}\label{sec:results}

\begin{figure}[t]
    \centering
    \subfigure[Divergence versus the null scale $s$ at $\gamma=0$~dB.]{%
        \includegraphics[width=.5\linewidth]{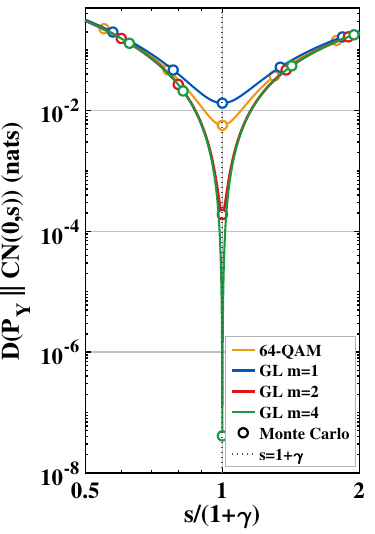}%
        \label{fig:val_lemmas_a}%
    }%
    \hfill%
    \subfigure[Non-Gaussianness $\Psi_{\mu}$ versus $\alpha$.]{%
        \includegraphics[width=.5\linewidth]{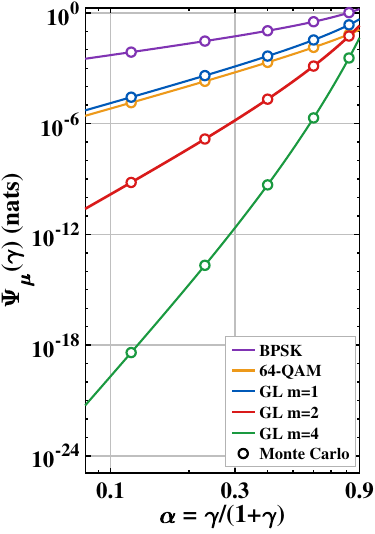}%
        \label{fig:val_lemmas_b}%
    }%
    \vspace{-0.5em}
    \caption{Validation of Lemma~\ref{lemma_ng} and Lemma~\ref{lemma_expansion} for BPSK, 64-QAM and the GL constellations with $m=1,2,4$ rings with Monte Carlo markers.}
    \label{fig:val_lemmas}
    \vspace{-1.5em}
\end{figure}


\subsection{Simulation Settings}\label{sec: sim_settings}

\subsubsection{Dataset}
\vspace{-0.3em}
A $100$-class ImageNet subset~\cite{deng2009imagenet} is adopted, comprising $1000$ training, $50$ validation, and $50$ test images per class.
All images are cropped to $H=W=128$, and the performance is measured by the peak signal-to-noise ratio (PSNR), the structural similarity index (SSIM) and the top-1 accuracy of a frozen ResNet-50 classifier~\cite{he2016deep}.

\subsubsection{Training Parameters}
\vspace{-0.3em}
All networks are trained by the AdamW optimizer with a learning rate of $3\times10^{-4}$ and a batch size of $64$ for $100$ epochs, and the loss weight in~\eqref{loss_fn} is $\beta=0.02$.

\subsubsection{Benchmarks}
\vspace{-0.3em}
DeepJSCC-Q~\cite{tung2022deepjsccq} with $64$-QAM, the masked VQ-VAE scheme~\cite{hu2023robust} with a learned codebook, SwinJSCC~\cite{yang2025swinjscc} with analog symbols and the MA-covert scheme~\cite{hu2026movable} with Gaussian signaling over the array designed for the legitimate link are adopted for comparison.
All schemes use the same $L$, the MA array, $\Pi$, and the scrambling of Section~\ref{sec: structure_layer}, and each transmits at the maximum power admitted by Theorem~\ref{theorem_rate} for its respective symbol distribution under the same $(\epsilon,n,\delta)$.

\subsubsection{System Parameters and Network Settings}
\vspace{-0.3em}
Unless otherwise stated, Alice is equipped with $M=16$ MAs that start from a $4\times4$ half-wavelength grid and move within a square region $\mathcal{A}$ of side $8\lambda$ with $d_{\min}=\lambda/2$, while Bob and Willie are equipped with $N=4$ and $M_{w}=4$ antennas.
Each link consists of $L_{p}=10$ paths with angles uniformly distributed over $[0,2\pi)$, and the transmit power in~\eqref{P_star} is measured relative to the normalized noise powers $\sigma_{b}^{2}=\sigma_{w}^{2}=1$ with an inactive budget $P_{\max}$.
The covertness parameters are $\epsilon=0.1$, $n=10^{6}$ and $\delta=10^{-2}$ with the PAPR constraint $\Pi=6$~dB.
The detailed network layer configuration is listed in Table~\ref{tab:network_config}.

\begin{table}[t]
\centering
\footnotesize
\caption{Layer Configuration of the Proposed Network}
\label{tab:network_config}
\vspace{-0.7em}
\renewcommand{\arraystretch}{1.0}
\setlength{\tabcolsep}{3pt}
\begin{tabular}{>{\centering\arraybackslash}m{1.45cm}ccc}
\hline
\textbf{Module} & \textbf{Layer} & \textbf{Width} & \textbf{Act.} \\
\hline
Semantic Encoder &
\makecell{Conv Stem ($5\times5$, stride $2$) \\ Downsample Stage $\times2$ \\ Self-Attention ($8$ heads) \\ Proj + LayerNorm} &
\makecell{$224$ \\ $448/672$ \\ $672$ \\ $128$} &
\makecell{GELU \\ GELU \\ GELU \\ --} \\
\hline
Invertible Flow &
\makecell{Sign Flip, Hadamard Mix, \\ Permutation} &
$2L$ & -- \\
\hline
Covert Symbol Mapping &
\makecell{Rank to Ring ($m=2$) \\ Phase Quantise ($M_{\mathrm{ph}}=32$) \\ Scramble + Interleave} &
\makecell{$L$ \\ $L$ \\ $L$} &
\makecell{-- \\ -- \\ --} \\
\hline
Semantic Decoder &
\makecell{Conv In, Self-Attention $\times2$ \\ Classify Head \\ Upsample Stage $\times2$ \\ PixelShuffle Head \\ FiLM Conditioning} &
\makecell{$672$ \\ $100$ \\ $448/224$ \\ $3$ \\ $128$} &
\makecell{GELU \\ Softmax \\ GELU \\ Sigmoid \\ GELU} \\
\hline
\end{tabular}
\vspace{-0.8em}
\end{table}
\subsection{Verification of Theoretical Results}
\vspace{-0.3em}

First, we validate Lemma~\ref{lemma_ng} and Lemma~\ref{lemma_expansion} through Monte Carlo simulations in Fig.~\ref{fig:val_lemmas}.
As shown in Fig.~\ref{fig:val_lemmas_a}, the Monte Carlo estimates of the divergence from the composite null hypothesis at $\gamma=0$~dB fall on the theoretical curves and the minimum value occurs at $s=1+\gamma$, with the value being  $\log(1+\gamma)-I_{\mu}(\gamma)$, which validates~\eqref{non_gaussianness} in Lemma~\ref{lemma_ng}.
The curves overlap for $s\neq1+\gamma$ since the scale mismatch contributes $D(\mathcal{CN}(0,1+\gamma)\|\mathcal{CN}(0,s))$ regardless of $\mu$, whereas the valley floor set by the constellation decreases with $m$.
In Fig.~\ref{fig:val_lemmas_b}, the Monte Carlo estimates of $\Psi_{\mu}(\gamma)$ match the theoretical curves that follow straight lines of slope $t+1$ against $\alpha$ on logarithmic axes, which validates~\eqref{low_snr} in Lemma~\ref{lemma_expansion} with $t=1$ for BPSK, $t=3$ for 64-QAM and $t=4m-1$ for the GL constellations by Proposition~\ref{prop_gl}.
At $\Psi_{\mu}=10^{-6}$, the admissible $\alpha$ increases from $0.06$ for 64-QAM to $0.72$ for GL $m=4$, and the convergence of the curves as $\alpha\to1$ confines the gain of the matching order to the low-SNR regime enforced by the budget $2\epsilon^{2}/n$ in~\eqref{max_power}.

\begin{figure}[t]
    \centering
    \includegraphics[width=0.84\linewidth]{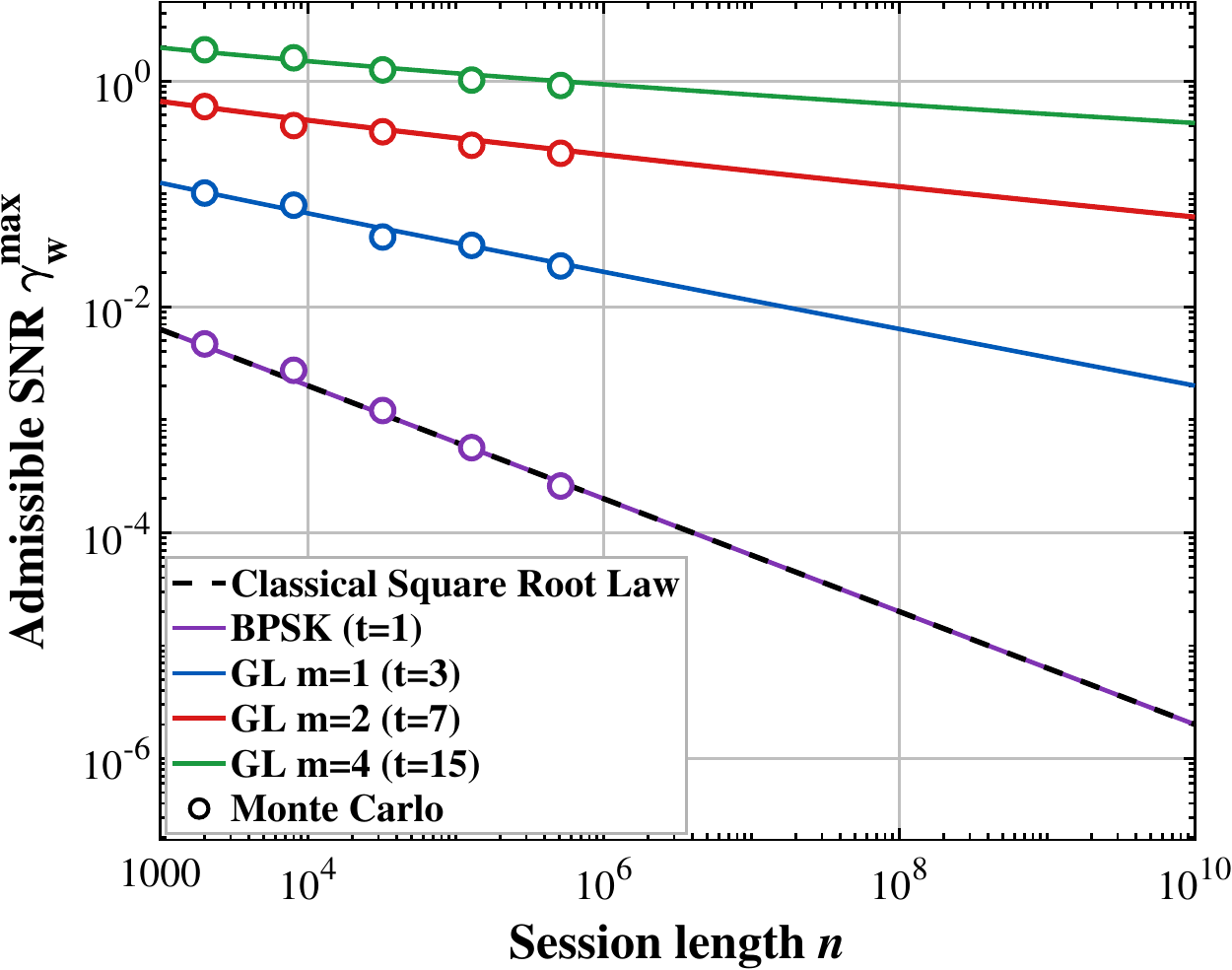}
    \vspace{-0.5em}
    \caption{Admissible warden SNR versus the session length $n$ under $\epsilon=0.1$ for the validation of Theorem~\ref{theorem_rate}.}
    \label{fig:val_theorem1}
    \vspace{-0.8em}
\end{figure}

As shown in Fig.~\ref{fig:val_theorem1}, Theorem~\ref{theorem_rate} is validated through the admissible warden SNR $\gamma_{w}^{\max}$ in~\eqref{max_power} against the session length $n$ under $\epsilon=0.1$.
The Monte Carlo markers denote the warden SNR at which the likelihood-ratio test at Willie over a session of $n$ symbols attains the minimum risk corresponding to the leakage budget $2\epsilon^{2}$ in~\eqref{covert_constraint}, and match the theoretical curves for all simulated constellations.
The BPSK curve with $t=1$ coincides with the classical square-root law $2\epsilon/\sqrt{n}$~\cite{bash2013limits}, which confirms that~\eqref{max_snr_rate} reduces to the classical law at $t=1$.
Since $\gamma_{w}^{\max}$ decays as $n^{-1/(t+1)}$ by~\eqref{low_snr}, a hundredfold increase in $n$ lowers the admissible SNR by $10$~dB under the classical law but by $2.5$~dB for GL $m=2$, and at $n=10^{6}$ the GL constellations with $m=2$ and $m=4$ admit warden SNRs of $-6.5$~dB and $-0.3$~dB compared with $-37$~dB under the classical law.

\subsection{Performance Analysis and Comparison}

\begin{figure} 
  \centering
  \includegraphics[width=0.8\linewidth]{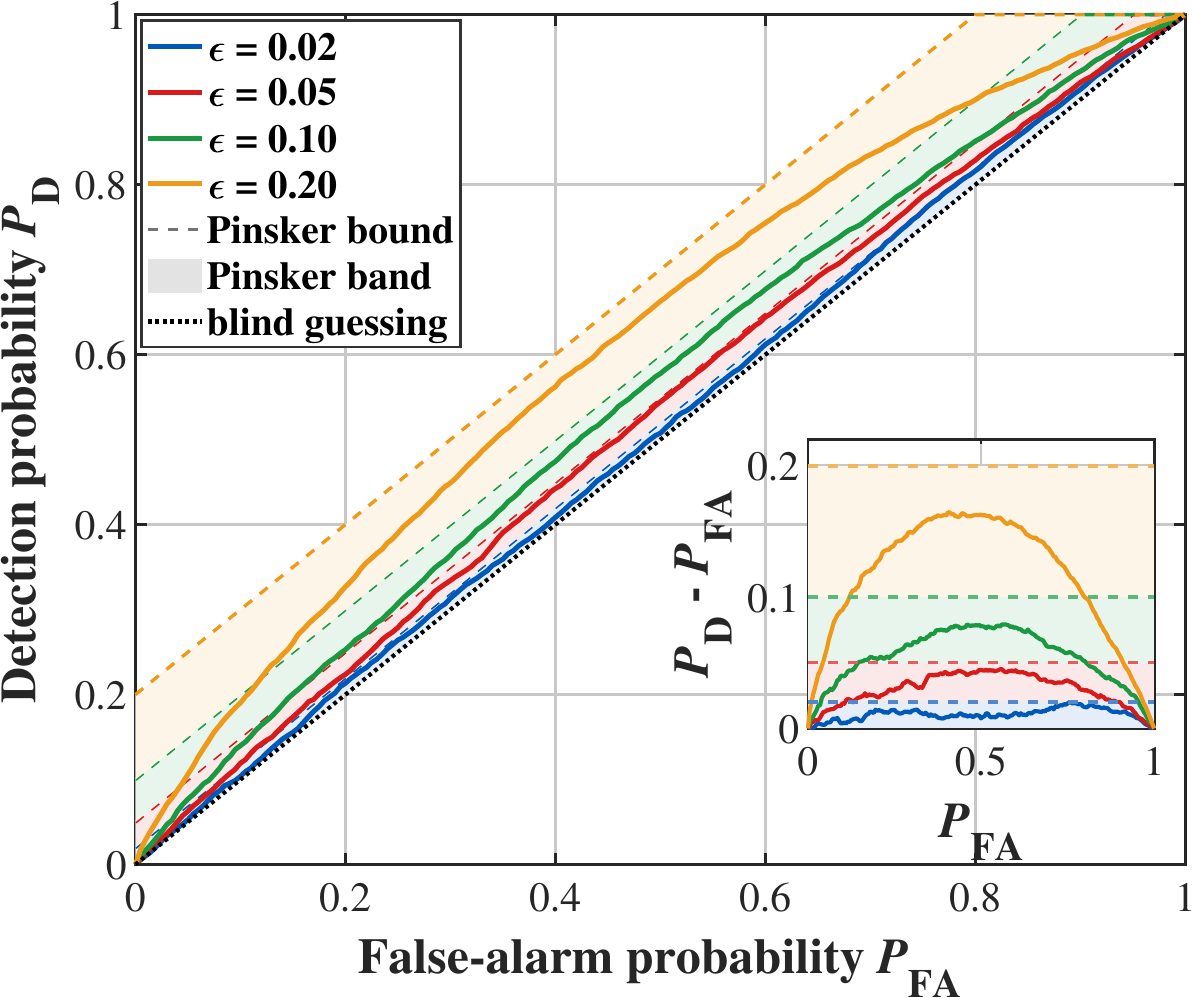}
    \vspace{-0.5em}
    \caption{Detection probability versus false-alarm probability of the optimal test at Willie under the admissible warden SNR for various $\epsilon$.}
    \label{fig:willie_roc}
    \vspace{-1.5em}
\end{figure}

Fig.~\ref{fig:willie_roc} depicts the detection probability $P_{\mathrm{D}}$ against the false-alarm probability $P_{\mathrm{FA}}$ of the likelihood-ratio test at Willie for sessions of $n=5000$ symbols from $\mathcal{X}_{\mathrm{GL}}$ with $m=2$, where the warden SNR equals the admissible value $\Psi_{\mu}^{-1}(2\epsilon^{2}/n)$ in~\eqref{max_power} for each $\epsilon$.
It can be observed that the receiver operating characteristic~(ROC) remains within the Pinsker band $P_{\mathrm{D}}\le P_{\mathrm{FA}}+\epsilon$ implied by~\eqref{minimax_bound} for all four budgets.
Since the likelihood-ratio test is optimal under the composite null hypothesis in~\eqref{hypotheses}, an arbitrary detector at Willie gains no more than $\epsilon$ under the admissible power of Theorem~\ref{theorem_rate} and the Gaussian-matched constellation, which verifies the effectiveness of the covert design.

\begin{figure}[t]
    \centering
    \subfigure[PSNR versus $M$.]{%
        \includegraphics[width=.48\linewidth]{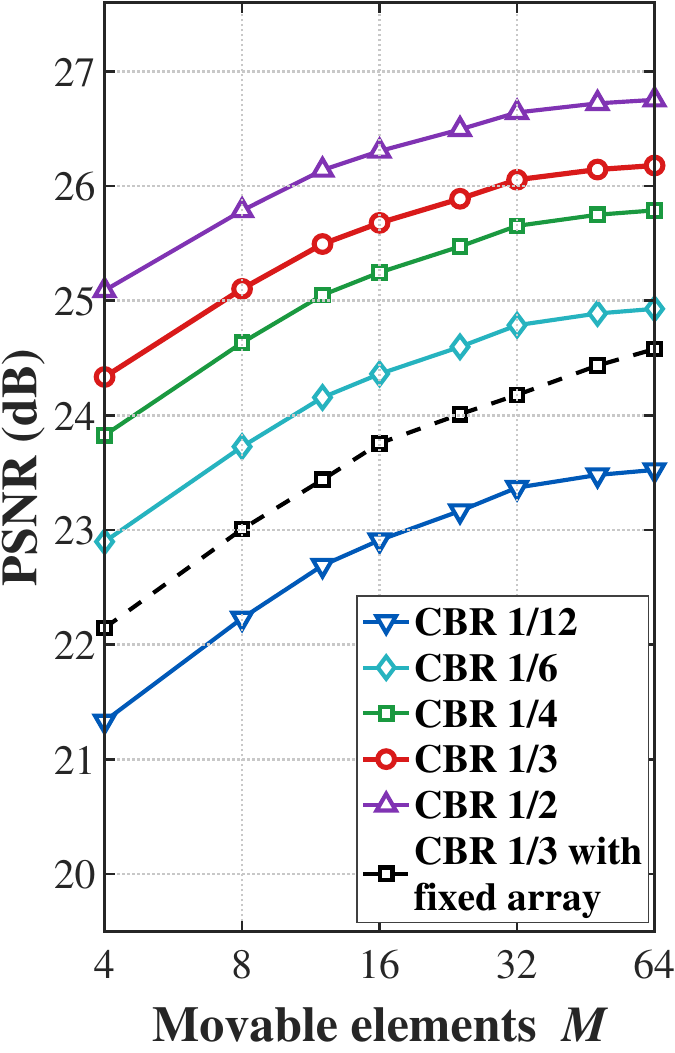}%
        \label{array_size}%
    }%
    \hfill%
    \subfigure[PSNR versus $A$.]{%
        \includegraphics[width=.48\linewidth]{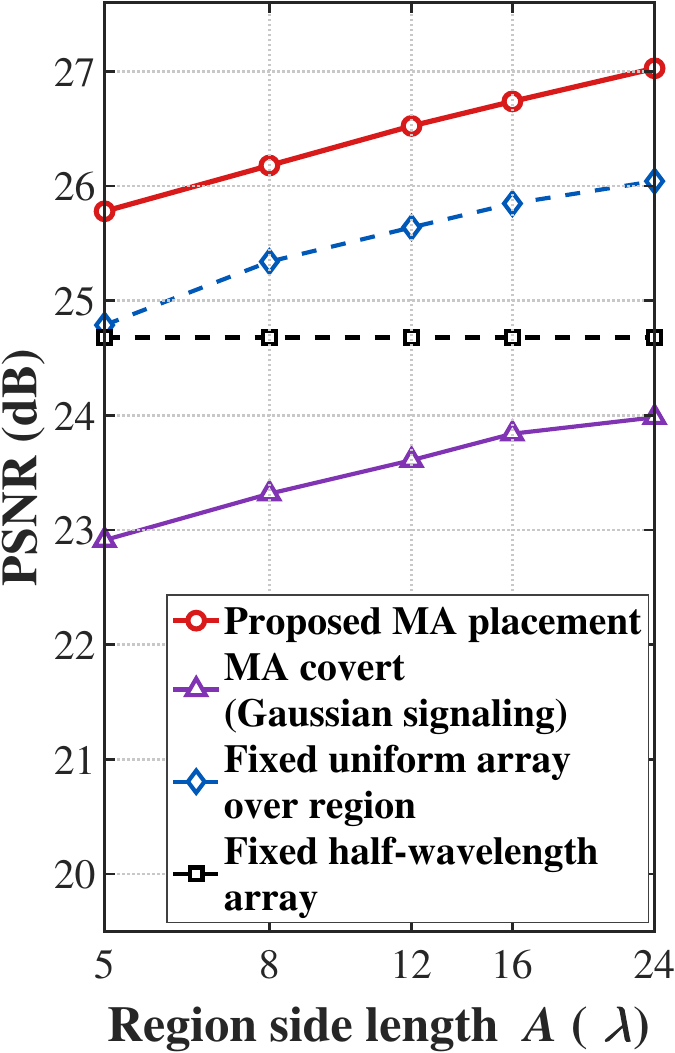}%
        \label{array_region}%
    }%
    \vspace{-0.5em}
    \caption{PSNR gains of the MA array over the fixed arrays under the same covertness budget.}
    \label{fig:ma_gain}
    \vspace{-1.3em}
\end{figure}

Fig.~\ref{array_size} illustrates the PSNR versus the number of MAs $M$ under different channel bandwidth ratios (CBRs), where the scheme of fixed array places the $M$ antennas on a half-wavelength grid under the same covertness budget.
The PSNR increases with $M$ under each CBR with diminishing returns, since the array term $\Delta/u_{1-\delta}$ in~\eqref{max_snr_rate} increases $\gamma_{b}^{\star}$ while the information per symbol is bounded by $\log K$ at a fixed CBR.
Compared with the fixed array, the MA placement of Corollary~\ref{cor_place} yields a gain of $1.6$ to $2.2$~dB at CBR $1/3$ over the considered range of $M$.

Fig.~\ref{array_region} further illustrates the PSNR versus the side length $A$ of the region $\mathcal{A}$ with $M=64$ at CBR $1/3$, where two fixed arrays serve as references. 
The uniform array distributes the $M$ antennas evenly over the region of side $A$, whereas the half-wavelength array places them on a $\lambda/2$ grid independent of $A$. 
The PSNR of the MA array increases with $A$ with diminishing returns, since a larger region admits a more flexible placement position adjustment that reduces the correlation in $\bar{\mathbf{R}}_{w}(\mathbf{T})$. 
Compared with the uniform array over the same region, the MA placement of Corollary~\ref{cor_place} attains a gain of $0.8$ to $1.1$~dB, and the gain over the half-wavelength array reaches $2.2$~dB, which indicates that the improvement stems from the optimized positions rather than from the enlarged region. 
The MA-covert benchmark with Gaussian signaling benefits from the region in the same manner but remains over $2.8$~dB below the proposed scheme due to its lower matching order.

\begin{figure*}[!t]
    \centering
    \subfigure[Comparison of PSNR.]{%
        \includegraphics[width=.33\textwidth]{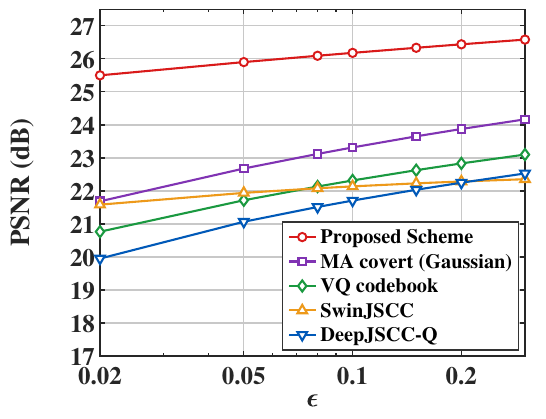}%
        \label{fig:compare_psnr}%
    }%
    \hfill%
    \subfigure[Comparison of SSIM.]{%
        \includegraphics[width=.33\textwidth]{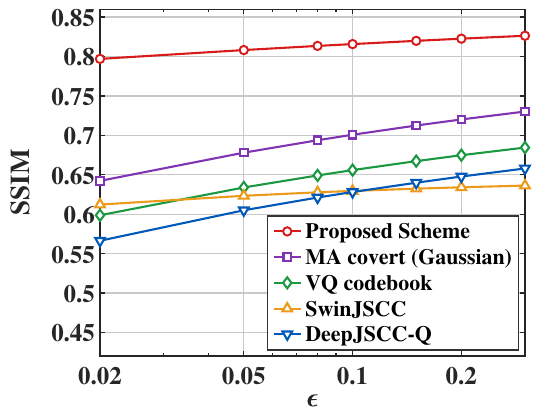}%
        \label{fig:compare_ssim}%
    }%
    \hfill%
    \subfigure[Comparison of semantic accuracy.]{%
        \includegraphics[width=.33\textwidth]{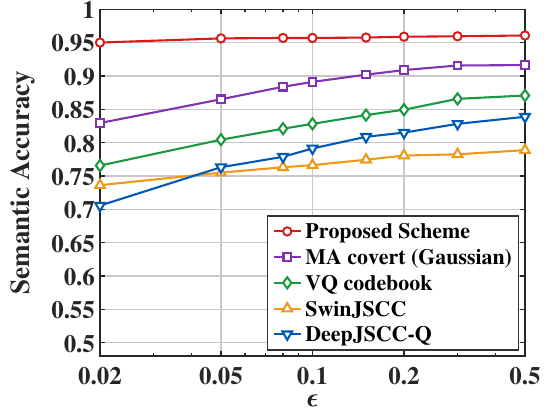}%
        \label{fig:compare_acc}%
    }%
    \vspace{-0.5em}
    \caption{Performance comparison of the proposed scheme and the benchmark schemes DeepJSCC-Q~\cite{tung2022deepjsccq}, the masked VQ-VAE scheme~\cite{hu2023robust}, SwinJSCC~\cite{yang2025swinjscc} and the MA-covert scheme~\cite{hu2026movable} under various covertness budgets $\epsilon$.}
    \label{fig:compare_all}
    \vspace{-0.8em}
\end{figure*}

Fig.~\ref{fig:compare_all} compares the proposed scheme with the benchmarks versus the covertness budget $\epsilon$, where each scheme operates at its own admissible power under Theorem~\ref{theorem_rate}.
As shown in Fig.~\ref{fig:compare_psnr} and Fig.~\ref{fig:compare_ssim}, the proposed scheme improves the PSNR by over $3.8$~dB and the SSIM by $24\%$ over the strongest benchmark at $\epsilon=0.02$.
The advantage under strict budgets arises from the matching order $t=7$, under which $\gamma_{b}^{\star}$ in~\eqref{max_snr_rate} scales with $\epsilon^{1/4}$ instead of $\epsilon^{1/2}$ for the benchmarks with $t=3$.
Moreover, Fig.~\ref{fig:compare_acc} shows that the semantic accuracy of the proposed scheme remains above $95\%$ over the entire range of $\epsilon$ with a significant performance gap over the benchmarks, which demonstrates that the task-relevant semantics are well preserved under strict covertness constraints.

\begin{figure*}[!t]
    \centering
    \setlength{\subfigtopskip}{0pt}
    \setlength{\subfigcapskip}{0pt}
    \setlength{\subfigbottomskip}{0pt}
    \def\PairSpan{\dimexpr\textwidth-2.2bp\relax}
    \makebox[\textwidth][c]{%
        \subfigure[PSNR versus the legitimate SNR.]{%
            \resizebox{0.437308868502\PairSpan}{!}{%
                \includegraphics{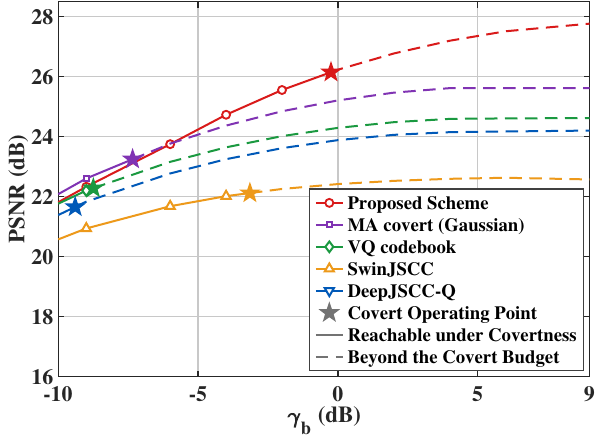}%
            }%
            \label{fig:operating_curves}%
        }%
        \hspace{2bp}%
        \subfigure[Reconstructed images at the respective covert operating points.]{%
            \raisebox{0.040061157413\PairSpan}{%
                \resizebox{0.562691131498\PairSpan}{!}{%
                    \includegraphics{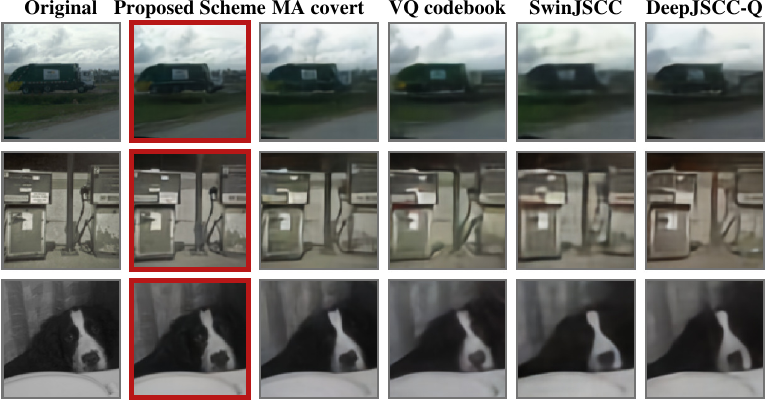}%
                }%
            }%
            \label{fig:visual_grid}%
        }%
    }%
    \vspace{-0.5em}
    \caption{Reconstruction performance of the proposed scheme and the benchmark schemes under the covertness constraint.}
    \label{fig:operating_and_visual}
    \vspace{-1.5em}
\end{figure*}

Fig.~\ref{fig:operating_curves} plots the PSNR versus the legitimate SNR $\gamma_{b}$, where each solid segment ends at the covert operating point $\gamma_{b}^{\star}$ and the dashed segment is unreachable under the covertness constraint.
The proposed scheme attains the highest PSNR at each $\gamma_{b}$ above $-6$~dB and the gap increases with the SNR, since the benchmarks saturate at lower PSNR ceilings while the proposed curve continues to increase.
In addition, the operating point of the proposed scheme lies $2.9$ to $9.1$~dB higher than those of the benchmarks by virtue of $t=7$ in~\eqref{max_snr_rate}, and the higher curve together with the higher operating point accounts for the PSNR gap of over $2.8$~dB at the respective operating points.
In the reconstructed images of Fig.~\ref{fig:visual_grid}, the proposed scheme restores sharper contours and textures in the vehicle and the dispenser, whereas the benchmarks blur the edges and smear the details.



\section{Conclusion}\label{sec:conclusion}
\vspace{-0.3em}

In this paper, we proposed a covert digital SemCom system assisted by an MA array in the presence of a warden with unknown noise level.
The maximum covert rate was characterized against the optimal detector with unknown noise level in closed-form, which scales the admissible power with the moment-matching order of the constellation.
Based on the derived rate, a GL ring constellation and a matching-order-aware MA placement were developed, and a covert symbol mapping together with an invertible flow was designed to attain the covert rate while conveying the semantics under the derived distortion bound.
Simulation results demonstrated that the proposed system achieves a superior reconstruction quality under covertness constraints, which provides a practical route to conceal semantic transmission on digital transceivers.

\begin{appendices}

\section{Proofs of Lemma~\ref{lemma_ng} and Lemma~\ref{lemma_expansion}}\label{app:lemmas}
\vspace{-0.3em}

Let $p_{Y}$ denote the probability density function~(PDF) of $\mathbb{P}_{Y}$.
For $s>0$, the divergence in~\eqref{non_gaussianness} is expressed as
\vspace{-0.4em}
\begin{equation}
\begin{aligned}
D\big(\mathbb{P}_{Y}\,\big\|\,\mathcal{CN}(0,s)\big)
&=\int p_{Y}(y)\log\frac{p_{Y}(y)}{(\pi s)^{-1}e^{-|y|^{2}/s}}\,\mathrm{d}y\\
&=-h(Y)+\log(\pi s)+\frac{1+\gamma}{s},
\end{aligned}
\label{app_kl_expand}
\end{equation}
where $\mathbb{E}[|Y|^{2}]=\gamma\,\mathbb{E}_{\mu}[|X|^{2}]+\mathbb{E}[|W|^{2}]=1+\gamma$ follows from the independence between $X$ and $W$.
The derivative of~\eqref{app_kl_expand} with respect to $s$ equals $(s-1-\gamma)/s^{2}$, which identifies $s^{\star}=1+\gamma$ as the unique minimizer.
Substituting $s^{\star}$ together with the chain rule $h(Y)=h(Y\,|\,X)+I(X;Y)$ and $h(Y\,|\,X)=h(W)=\log(\pi e)$ into~\eqref{app_kl_expand} yields
\vspace{-0.4em}
\begin{equation}
\Psi_{\mu}(\gamma)=-h(Y)+\log\big(\pi(1+\gamma)\big)+1=\log(1+\gamma)-I_{\mu}(\gamma).
\label{app_psi_closed}
\end{equation}
Since the divergence is invariant to a common invertible transformation of both distributions, $\Psi_{\mu}(\gamma)=D(\mathbb{P}_{Z}\|\mathcal{CN}(0,1))$ holds for the normalized observation $Z=Y/\sqrt{1+\gamma}$.
For the monotonicity, the derivative of~\eqref{app_psi_closed} equals $1/(1+\gamma)-\mathrm{mmse}_{\mu}(\gamma)$ by the relation between the mutual information and the minimum mean-square error (MMSE) $\mathrm{mmse}_{\mu}(\gamma)$ of estimating $X$ from $Y$~\cite{guo2005mutual}, which is positive since the MMSE of a unit-power input is upper bounded by $1/(1+\gamma)$ with equality for the Gaussian input.
Together with $I_{\mu}(0)=0$, $\Psi_{\mu}(0)=0$ holds, which completes the proof of Lemma~\ref{lemma_ng}.

For the proof of Lemma~\ref{lemma_expansion}, let $\phi_{v}(z)=(\pi v)^{-1}e^{-|z|^{2}/v}$ denote the density of $\mathcal{CN}(0,v)$ with $\phi=\phi_{1}$, under which the complex Hermite polynomials are orthogonal with $\mathbb{E}_{\phi}[H_{p,q}H^{*}_{p',q'}]=p!\,q!\,\mathbf{1}\{(p,q)=(p',q')\}$.
By the definition of $\alpha$, the normalized observation $Z=\sqrt{\alpha}\,X+\sqrt{\bar{\alpha}}\,W$ with $\bar{\alpha}=1-\alpha$ has the density $p_{Z}(z)=\mathbb{E}_{X}[\phi_{\bar{\alpha}}(z-\sqrt{\alpha}X)]$, and the chi-square divergence $\chi^{2}_{\mu}(\gamma)=\int p_{Z}^{2}(z)/\phi(z)\,\mathrm{d}z-1$ involves the Gaussian overlap integral with its Mehler expansion~\cite{ismail2016analytic}
\vspace{-0.4em}
\begin{equation}
\begin{aligned}
\kappa_{\alpha}(a,a')
&=\int\frac{\phi_{\bar{\alpha}}(z-\sqrt{\alpha}a)\,\phi_{\bar{\alpha}}(z-\sqrt{\alpha}a')}{\phi(z)}\,\mathrm{d}z\\
&=\sum_{p,q\ge 0}\frac{\alpha^{p+q}}{p!\,q!}H_{p,q}(a)H^{*}_{p,q}(a').
\end{aligned}
\label{app_mehler}
\end{equation}
With an independent copy $\tilde{X}$ of $X$, the chi-square divergence is expanded as
\vspace{-1em}
\begin{equation}
\begin{aligned}
1+\chi^{2}_{\mu}(\gamma)
&\overset{(a)}{=}\mathbb{E}_{X,\tilde{X}}\big[\kappa_{\alpha}(X,\tilde{X})\big]\\
&=\sum_{p,q\ge 0}\frac{\alpha^{p+q}}{p!\,q!}\big|\mathbb{E}_{\mu}[H_{p,q}(X)]\big|^{2}\\
&\overset{(b)}{=}1+2\kappa_{t}\,\alpha^{t+1}\big(1+O(\alpha)\big),
\end{aligned}
\label{app_chi2}
\end{equation}
where $(a)$ follows by substituting $p_{Z}$ into $\chi^{2}_{\mu}(\gamma)$ and applying~\eqref{app_mehler} together with the independence between $X$ and $\tilde{X}$, and $(b)$ follows from $H_{0,0}=1$ together with the matching order in~\eqref{matching_order} and the coefficient in~\eqref{kappa_def}.
Let $r(z)=p_{Z}(z)/\phi(z)-1$ denote the centered density ratio with $\mathbb{E}_{\phi}[r]=0$ and $\mathbb{E}_{\phi}[r^{2}]=\chi^{2}_{\mu}(\gamma)$.
The output non-Gaussianness is then expanded as
\vspace{-0.4em}
\begin{equation}
\begin{aligned}
&\Psi_{\mu}(\gamma)
\overset{(c)}{=}\mathbb{E}_{\phi}\big[(1+r)\log(1+r)\big]\\
&=\mathbb{E}_{\phi}\Big[r+\frac{r^{2}}{2}\Big]+O\big(\mathbb{E}_{\phi}[|r|^{3}]\big)\\
&\overset{(d)}{=}\frac{1}{2}\chi^{2}_{\mu}(\gamma)+O\big(\alpha^{3(t+1)/2}\big)
\overset{(e)}{=}\kappa_{t}\,\alpha^{t+1}\big(1+O(\alpha)\big),
\end{aligned}
\label{app_psi_expand}
\end{equation}
where $(c)$ expresses $D(\mathbb{P}_{Z}\|\phi)$ as an expectation under $\phi$ and applies the Taylor expansion $(1+r)\log(1+r)=r+r^{2}/2+O(|r|^{3})$.
Step $(d)$ uses $\mathbb{E}_{\phi}[|r|^{3}]=O(\alpha^{3(t+1)/2})$, which follows from the Hermite expansion of $r$ that starts at degree $t+1$ with coefficients proportional to $\alpha^{(p+q)/2}$ and from the hypercontractivity of Gaussian chaos~\cite{janson1997gaussian}.
Step $(e)$ follows from~\eqref{app_chi2} and from $3(t+1)/2\ge t+2$ for $t\ge 1$, which holds for every zero-mean $\mu$.
This completes the proof of Lemma~\ref{lemma_expansion}.


\section{Proof of Theorem~\ref{theorem_rate}}\label{app:theorem_rate}
\vspace{-0.3em}

Under fixed $\mathbf{T}$ and $\mathbf{w}$, $\gamma_{w}=\bar{P}U$ is constant over the session, and the i.i.d. symbols together with the i.i.d. noise yield $\mathbb{P}^{(n)}_{1}=\mathbb{P}_{Y}^{\otimes n}$ with $\gamma=\bar{P}U$ while $\mathcal{CN}(\mathbf{0},s\mathbf{I}_{n})$ is the $n$-fold product of $\mathcal{CN}(0,s)$.
The additivity of the divergence over products and~\eqref{non_gaussianness} simplify the session leakage as
\vspace{-0.4em}
\begin{equation}
\begin{aligned}
\mathcal{L}_{n}
&=\inf_{s>0}D\big(\mathbb{P}_{Y}^{\otimes n}\,\big\|\,\mathcal{CN}(0,s)^{\otimes n}\big)\\
&=n\inf_{s>0}D\big(\mathbb{P}_{Y}\,\big\|\,\mathcal{CN}(0,s)\big)=n\,\Psi_{\mu}\big(\bar{P}U\big).
\end{aligned}
\label{app_leak_chain}
\end{equation}
Since $\Psi_{\mu}^{-1}$ exists and is strictly increasing on $[0,\infty)$ by Lemma~\ref{lemma_ng}, the covertness probability equals
\vspace{-0.4em}
\begin{equation}
\begin{aligned}
\Pr\big\{\mathcal{L}_{n}\le 2\epsilon^{2}\big\}
&=\Pr\Big\{U\le\frac{1}{\bar{P}}\Psi_{\mu}^{-1}\Big(\frac{2\epsilon^{2}}{n}\Big)\Big\}\\
&=F_{U}\Big(\frac{1}{\bar{P}}\Psi_{\mu}^{-1}\Big(\frac{2\epsilon^{2}}{n}\Big)\Big).
\end{aligned}
\label{app_prob_chain}
\end{equation}
With the cumulative distribution function~(CDF) $F_{U}$ of $U$, the quantile $u_{1-\delta}=\inf\{x:F_{U}(x)\ge 1-\delta\}$ and the right continuity of $F_{U}$, the probability in~\eqref{app_prob_chain} is no less than $1-\delta$ under the necessary and sufficient condition $\Psi_{\mu}^{-1}(2\epsilon^{2}/n)/\bar{P}\ge u_{1-\delta}$, which is rearranged into~\eqref{max_power}.
The legitimate SNR $\gamma_{b}^{\star}=\Delta\bar{P}^{\star}$ in~\eqref{max_snr_rate} follows from~\eqref{avg_snr} since $\Delta$ is independent of $P$.

For the covert rate, substituting $\alpha=\gamma/(1+\gamma)=\gamma(1+O(\gamma))$ into~\eqref{low_snr} and inverting the expansion yield
\vspace{-0.5em}
\begin{align}
\Psi_{\mu}(\gamma)&=\kappa_{t}\,\gamma^{t+1}\big(1+O(\gamma)\big), \label{psi_forward} \\
\Psi_{\mu}^{-1}(x)&=\Big(\frac{x}{\kappa_{t}}\Big)^{\frac{1}{t+1}}\Big(1+O\big(x^{\frac{1}{t+1}}\big)\Big),\quad x\to 0. \label{app_inverse}
\end{align}
With~\eqref{non_gaussianness} and $\log(1+\gamma)=\gamma+O(\gamma^{2})$, the covert rate in~\eqref{max_snr_rate} is expanded with $x=2\epsilon^{2}/n$ as
\vspace{-0.4em}
\begin{equation}
\begin{aligned}
R_{\mathrm{c}}
&=\log\big(1+\gamma_{b}^{\star}\big)-\Psi_{\mu}\big(\gamma_{b}^{\star}\big)
=\gamma_{b}^{\star}+O\big((\gamma_{b}^{\star})^{2}\big)\\
&=\frac{\Delta}{u_{1-\delta}}\Big(\frac{2\epsilon^{2}}{\kappa_{t}\,n}\Big)^{\frac{1}{t+1}}\Big(1+O\big(n^{-\frac{1}{t+1}}\big)\Big),
\end{aligned}
\label{app_rate_chain}
\end{equation}
where $\gamma_{b}^{\star}=O(n^{-1/(t+1)})$ by~\eqref{app_inverse} and the ratio between $R_{\mathrm{c}}$ and the leading term tends to one as $n\to\infty$.
This completes the proof of Theorem~\ref{theorem_rate}.

\vspace{-0.5em}

\section{Proof of Theorem~\ref{theorem_distortion}}\label{app:theorem_distortion}
\vspace{-0.3em}

Let $\varXi$ denote the state of the scrambling and interleaving sequence, $\mathbb{P}_{1,k}$ denote the marginal distribution of the $k$-th observation under $H_{1}$ with the symbol distribution $\mu_{k}$, and let $\bar{\mu}=\frac{1}{n}\sum_{k=1}^{n}\mu_{k}$.
Since $\mathcal{CN}(\mathbf{0},s\mathbf{I}_{n})$ is a product distribution, the divergence in~\eqref{leakage} decomposes for $s>0$ as
\vspace{-0.4em}
\begin{equation}
\begin{aligned}
D\big(\mathbb{P}^{(n)}_{1}\,\big\|\,\mathcal{CN}(\mathbf{0},s\mathbf{I}_{n})\big)=\sum_{k=1}^{n}D\big(\mathbb{P}_{1,k}\,\big\|\,\mathcal{CN}(0,s)\big)+\mathrm{TC}\big(\mathbb{P}^{(n)}_{1}\big),
\end{aligned}
\label{app_tc}
\end{equation}
where the total correlation $\mathrm{TC}(\mathbb{P}^{(n)}_{1})\ge 0$ is the divergence between $\mathbb{P}^{(n)}_{1}$ and the product of its marginals.
Taking the infimum over $s$ yields
\vspace{-0.4em}
\begin{equation}
\begin{aligned}
\mathcal{L}_{n}
&\overset{(a)}{\ge}\inf_{s>0}\sum_{k=1}^{n}D\big(\mathbb{P}_{1,k}\,\big\|\,\mathcal{CN}(0,s)\big)\\
&\overset{(b)}{\ge}\sum_{k=1}^{n}\Psi_{\mu_{k}}\big(\bar{P}U\big)
\overset{(c)}{\ge}n\,\Psi_{\bar{\mu}}\big(\bar{P}U\big),
\end{aligned}
\label{app_leak_lower}
\end{equation}
where $(a)$ follows by dropping the total correlation, $(b)$ follows from~\eqref{non_gaussianness}, and $(c)$ follows from the Jensen inequality since $\mu\mapsto\Psi_{\mu}(\gamma)$ is convex due to the concavity of the mutual information in $\mu$.
Applying the quantile argument in~\eqref{app_prob_chain} to~\eqref{app_leak_lower} and multiplying by $\Delta$ bound the legitimate SNR as $\gamma_{b}=\Delta\bar{P}\le\frac{\Delta}{u_{1-\delta}}\Psi_{\bar{\mu}}^{-1}(2\epsilon^{2}/n)=\gamma_{b}^{\star}(\bar{\mu})$.

Let $I_{j}=I(S_{j};\widehat{S}_{j},\widehat{\Theta}_{j})$ denote the information delivered for the $j$-th object.
Since the reconstructions are functions of $(\mathbf{y}_{b},\varXi)$ and the objects are independent of each other and of $\varXi$, the data processing inequality yields
\vspace{-0.4em}
\begin{equation}
\begin{aligned}
\sum_{j=1}^{J}I_{j}
&\le I\big(S_{1},\dots,S_{J};\mathbf{y}_{b}\,\big|\,\varXi\big)
\le I\big(\mathbf{x};\mathbf{y}_{b}\,\big|\,\varXi\big)\\
&\overset{(d)}{\le}I\big(\mathbf{x};\mathbf{y}_{b}\big)
\overset{(e)}{\le}\sum_{k=1}^{n}I_{\mu_{k}}(\gamma_{b}),
\end{aligned}
\label{app_info_chain}
\end{equation}
where $(d)$ follows from the Markov chain $\varXi\to\mathbf{x}\to\mathbf{y}_{b}$ and $(e)$ uses the memoryless channel in~\eqref{scalar_ch}.
Dividing both sides of~\eqref{app_info_chain} by $J$ leads to
\vspace{-0.4em}
\begin{equation}
\begin{aligned}
\frac{1}{J}\sum_{j=1}^{J}I_{j}
&\le\frac{L}{n}\sum_{k=1}^{n}I_{\mu_{k}}(\gamma_{b})
\overset{(f)}{\le}L\,I_{\bar{\mu}}(\gamma_{b})\\
&\overset{(g)}{\le}L\,I_{\bar{\mu}}\big(\gamma_{b}^{\star}(\bar{\mu})\big)
\le L\,\eta_{n}^{\star},
\end{aligned}
\label{app_rate_chain2}
\end{equation}
where $(f)$ follows from the concavity of $\mu\mapsto I_{\mu}(\gamma)$ over the common constellation $\mathcal{X}$, $(g)$ follows from the SNR bound with the monotonicity of $I_{\bar{\mu}}$, and the last step follows from~\eqref{covert_sem_rate}.
The session-averaged distortion is then bounded as
\vspace{-0.4em}
\begin{equation}
\bar{D}_{J}
\overset{(h)}{\ge}\frac{1}{J}\sum_{j=1}^{J}\mathcal{D}_{\beta}(I_{j})
\overset{(i)}{\ge}\mathcal{D}_{\beta}\Big(\frac{1}{J}\sum_{j=1}^{J}I_{j}\Big)
\ge\mathcal{D}_{\beta}\big(L\,\eta_{n}^{\star}\big),
\label{app_distortion_chain}
\end{equation}
where $(h)$ follows from~\eqref{rd_function}, $(i)$ follows from the convexity of $\mathcal{D}_{\beta}$, and the last step follows from~\eqref{app_rate_chain2} with the monotonicity of $\mathcal{D}_{\beta}$.
Equality in $(a)$, $(c)$ and $(g)$ requires $\mathrm{TC}(\mathbb{P}^{(n)}_{1})=0$, $\mu_{1}=\dots=\mu_{n}$ and $\bar{P}=\bar{P}^{\star}$, which are guaranteed by~\eqref{target_law}.
This completes the proof of Theorem~\ref{theorem_distortion}.

\end{appendices}

\bibliographystyle{IEEEtran}
\bibliography{reference}

\end{document}